\definecolor{mygreen}{RGB}{10,110,230}
\definecolor{myred}{RGB}{10,110,230}
\renewcommand{\epsilon}{\varepsilon}
\DeclareMathOperator{\E}{\ensuremath{\normalfont \textbf{E}}}
\newcommand{\hiddencomment}[1]{}
\newcommand{\mc}[1]{\ensuremath{\mathcal{#1}}}
\newcounter{protocolcounter}
\newenvironment{protocol}{
\refstepcounter{protocolcounter}
\begin{whitetbox}
\textbf{Algorithm \theprotocolcounter:}%
}{\end{whitetbox}} 
\crefname{protocolcounter}{Algorithm}{Algorithms}
\DeclareMathOperator{\poly}{poly}
\crefname{lemma}{Lemma}{Lemmas}
\crefname{theorem}{Theorem}{Theorems}
\crefname{property}{Property}{Properties}
\crefname{claim}{Claim}{Claims}
\crefname{definition}{Definition}{Definitions}
\crefname{observation}{Observation}{Observations}
\crefname{proposition}{Proposition}{Propositions}
\crefname{assumption}{Assumption}{Assumptions}
\crefname{line}{Line}{Lines}
\crefname{figure}{Figure}{Figures}
\crefname{equation}{}{}
\crefname{section}{Section}{Sections}
\crefname{appendix}{Appendix}{Appendices}
\newtheorem{theorem}{Theorem}[section]
\newtheorem{lemma}[theorem]{Lemma}
\newtheorem{proposition}[theorem]{Proposition}
\newtheorem{corollary}[theorem]{Corollary}
\newtheorem{definition}[theorem]{Definition}
\newtheorem{claim}[theorem]{Claim}
\definecolor{mylightgray}{RGB}{230,230,230}
\algnewcommand{\IIf}[2]{\textbf{if} #1 \textbf{then} #2}
\algnewcommand{\EndIIf}{\unskip\ \algorithmicend\ \algorithmicif}
\newcommand{\restatethm}[2]{\noindent \textbf{Theorem~#1} (restated) \textbf{.} {\em #2}}
\newenvironment{graytbox}{
\par\addvspace{0.1cm}
\begin{tcolorbox}[width=\textwidth,
                  boxsep=5pt,
                  left=1pt,
                  right=1pt,
                  top=2pt,
                  bottom=2pt,
                  boxrule=0pt,
                  arc=0pt,
                  colback=mylightgray,
                  colframe=black,
                  ]
}{
\end{tcolorbox}
}
\newenvironment{whitetbox}{
\par\addvspace{0.1cm}
\begin{tcolorbox}[width=\textwidth,
                  boxsep=5pt,
                  left=1pt,
                  right=1pt,
                  top=2pt,
                  bottom=2pt,
                  boxrule=1pt,
                  arc=0pt,
                  colframe=black,
                  colback=white
                  ]
}{
\end{tcolorbox}
}
\renewcommand{\paragraph}{%
  \@startsection{paragraph}{4}%
  {\z@}{10pt}{-1em}%
  {\normalfont\normalsize\bfseries}%
}
\title{Streaming Edge Coloring with Asymptotically Optimal Colors}
\author{
Soheil Behnezhad \\ {\em Northeastern University} \and Mohammad Saneian  \\ {\em Northeastern University}
}
\date{}
\begin{document}

\maketitle

\thispagestyle{empty}

\begin{abstract}
    Given a graph $G$, an {\em edge-coloring} is an assignment of colors to {\em edges} of $G$ such that any two edges sharing an endpoint receive different colors. By Vizing's celebrated theorem, any graph of maximum degree $\Delta$ needs at least $\Delta$ and at most $(\Delta + 1)$ colors to be properly edge colored. In this paper, we study edge colorings in the {\em streaming} setting. The edges arrive one by one in an {\em arbitrary} order. The algorithm takes a single pass over the input and must output a solution using a much smaller space than the input size. Since the output of edge coloring is as large as its input, the assigned colors should also be reported in a streaming fashion.

    \smallskip \smallskip
    The streaming edge coloring problem has been studied in a series of works over the past few years. The main challenge is that the algorithm cannot ``remember'' all the color assignments that it returns. To ensure the validity of the solution, existing algorithms use many more colors than Vizing's bound. Namely, in $n$-vertex graphs, the state-of-the-art algorithm with $\widetilde{O}(n s)$  space\footnote{Here and throughout the paper, $\widetilde{O}(f) = O(f \cdot \poly\log n).$} requires $O(\Delta^2/s + \Delta)$ colors. Note, in particular, that for an asymptotically optimal $O(\Delta)$ coloring, this algorithm requires $\Omega(n\Delta)$ space which is  as large as the input. Whether such a coloring can be achieved with sublinear space has been left open.

    \smallskip\smallskip
    In this paper, we answer this question in the affirmative. We present a randomized algorithm that returns an asymptotically optimal $O(\Delta)$ edge coloring using $\widetilde{O}(n \sqrt{\Delta})$ space. More generally,  our algorithm returns a proper $O(\Delta^{1.5}/s + \Delta)$ edge coloring with $\widetilde{O}(n s)$ space, improving prior algorithms for the whole range of $s$.
\end{abstract}


\clearpage
\setcounter{page}{1}
\section{Introduction}

Motivated by its applications in processing massive graphs, the {\em graph streaming model} has  gained significant attention over the past two decades. This model assumes that the input graph is too large to be stored in memory and is presented to the algorithm in a sequence of edges. Many graph problems have been studied in this model, including maximum matching \cite{AssadiBKL-STOC23,Kapralov21,Assadi-SODA22, FeigenbaumKMSZ04,AssadiB-ICALP21}, graph connectivity \cite{AhnGM12,AssadiS-STOC23}, vertex coloring \cite{AssadiCK19,AssadiKM22,AssadiCS22,BeraCG20,ChakrabartiGS22,AssadiCGS-Arxiv}, edge coloring \cite{BehnezhadDHKS19,CharikarL21,AnsariSZ22}, cut and spectral sparsifiers \cite{AhnGM12b,KapralovLMMS14,AssadiD-SOSA21,ChenKL22,KapralovMMMNST20}, and graph clustering \cite{BehnezhadCMT22,BehnezhadCMT-SODA23,Cohen-AddadLMNP21,AhnCGMW15} among many others (this is by no means a comprehensive list). In this paper, we  continue the line of work on the {\em edge coloring} problem in the streaming model. The goal of edge coloring is to assign colors to edges of a graph such that no two adjacent edges share the same color. Since the output of edge coloring is as large as its input, a streaming algorithm cannot return it in memory. Instead, the goal is to also return the solution in a streaming fashion. Doing so, the main challenge is that the algorithm cannot ``remember'' all the reported edge colors, yet has to ensure that any two incident edges receive different colors.

Edge coloring is a fundamental problem in graph theory and has many practical applications in areas such as scheduling, communication networks, and VLSI design. By a classic result of Vizing, any graph of maximum degree $\Delta$ needs at least $\Delta$ and at most $(\Delta + 1)$ colors to be properly edge colored.\footnote{To see the lower bound, note that $\Delta$ colors are needed just to color the edges of a vertex with degree $\Delta$.} While existing algorithms for finding a $(\Delta+1)$ edge coloring are rather complicated, a $(2\Delta-1)$ coloring can be found by a simple greedy algorithm that iterates over the edges and chooses an arbitrary available color for each. Unfortunately, even this simple greedy algorithm is hard to implement in the streaming setting. Recall that the algorithm cannot remember all the assigned edge colors in memory, hence it is unclear how to verify which color is available for the next edge that arrives. 

\paragraph{Prior works:} The streaming edge coloring problem was first studied by \citet*{BehnezhadDHKS19} who gave a randomized algorithm for $O(\Delta^2)$ edge-coloring with $\widetilde{O}(n)$ space, where $n$ is the number of vertices. In a follow up work, \citet*{CharikarL21} showed that, more generally, for any parameter $s \geq 1$ there is a randomized streaming algorithm that $O(\Delta^2/s + \Delta)$ edge-colors the graph using $\widetilde{O}(ns)$ space. Later, \citet*{AnsariSZ22} obtained the same bound but using a simple and clean {\em deterministic} algorithm. Note that for an asymptotically optimal $O(\Delta)$ edge coloring, the algorithms above require $O(n \Delta)$ space. This, unfortunately, is {\em not} sublinear in the input size as any graph of maximum degree $\Delta$ has at most $O(n \Delta)$ edges. Put differently, for the case of $O(\Delta)$ coloring, no improvement over the trivial algorithm that stores the whole graph in memory and then colors it is known. This state of affairs leaves an important  question open:
\begin{quote}
{\em Does there exist a streaming algorithm with sublinear space for $O(\Delta)$ edge coloring?}
\end{quote}

\paragraph{Our contribution:} In this paper, we answer the question above in the affiramtive. Our main result is the following algorithm:

\begin{graytbox}
\begin{theorem}\label{thm:main}
    For any $s \geq 1$, there is a randomized streaming algorithm that  with high probability reports a $O(\Delta^{1.5}/s + \Delta)$ edge-coloring  under arbitrary edge arrivals using $\widetilde{O}(n s)$ space.
\end{theorem}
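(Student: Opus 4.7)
My plan is to build on the palette-block framework underlying the prior $O(\Delta^2/s+\Delta)$-color algorithms and to inject randomness in a way that yields a birthday-style $\sqrt{\Delta}$ improvement. Concretely, I would partition the color space $[C]$ with $C=\Theta(\Delta^{1.5}/s+\Delta)$ into $M=\Theta(\Delta/s)$ equal-sized blocks of size $g=\Theta(\sqrt{\Delta s})$, and let each vertex $v$ independently sample, in advance, a random palette $P_v\subseteq[C]$ of size $\widetilde{\Theta}(s)$ spread uniformly across the blocks. When an edge $(u,v)$ arrives, the algorithm searches for a block $i$ in which $P_u\cap P_v$ still contains a color that has not been assigned at either endpoint and outputs that color. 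The per-vertex state is only $\widetilde{O}(s)$ (a compact description of $P_v$ plus a record of which preferred colors have been used in each block), so the total space is $\widetilde{O}(ns)$.

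The core of the proof is to show that such a block always exists, with high probability. I would split this into two steps. The first is a Chernoff plus union-bound argument establishing that, for every pair $(u,v)$ and every block, $|P_u\cap P_v|$ restricted to that block is tightly concentrated around its mean; since the palettes are sampled up-front and independently of the stream, this bound holds uniformly. The second is a load argument bounding the number of colors used by $u$ (resp.\ $v$) in any single block, using the fact that $\deg(u)\leq\Delta$ and that $u$'s edges are spread across $M$ blocks. The birthday-paradox gain appears here: where a deterministic analysis needs the per-block palette at $u$ to exceed the per-block load $\Theta(\Delta/M)$, the randomized analysis only needs it to exceed $\Theta(\sqrt{\Delta/M})$ because random palette collisions concentrate on the order of the square root. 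Optimizing $g$ and $|P_v|$ against the space budget $|P_v|=\widetilde{O}(s)$ then gives $C=\Theta(\Delta^{1.5}/s)$ total colors.

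The main obstacle is handling arbitrary arrivals. Since the stream order is adversarial, a naive concentration bound is inadequate: an adversary could correlate future arrivals with the intermediate state of the algorithm. My plan is to handle this in two stages of conditioning. First, fix the up-front random palettes $\{P_v\}$ and argue that the ``good event'' of large pairwise palette intersections holds simultaneously for every pair of vertices; this is a one-shot bound independent of the stream. Second, for the internal coin flips used to break ties among available colors, I would use a Freedman-type martingale over the sequence of color commitments to control the deviation of the empirical per-block loads from their expectations, exploiting that every commitment changes a load by at most one. Edges that nevertheless fail the main procedure (expected to total $o(\Delta)$ per vertex) would be colored greedily from a reserve palette of $O(\Delta)$ fallback colors, yielding the additive $+\Delta$ term in the final bound.
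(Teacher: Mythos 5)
There is a fundamental counting gap in your plan: you assign every edge $(u,v)$ a color from $P_u\cap P_v$, where each palette $P_v$ is a \emph{static} set of size $\widetilde{\Theta}(s)$ sampled up front. Since every edge incident to $u$ then receives a color from $P_u$, and a proper coloring needs all $\deg(u)\leq\Delta$ of these colors to be distinct, your scheme can color at most $\widetilde{O}(s)$ edges per vertex; for $s\ll\Delta$ (e.g.\ the headline case $s=\sqrt{\Delta}$) almost all edges of a high-degree vertex must fail, so the ``reserve palette'' is not a low-probability fallback but the main event, and coloring those edges greedily from $O(\Delta)$ fallback colors requires remembering which fallback colors are used at each endpoint --- exactly the $\Omega(n\Delta)$ memory the theorem is supposed to avoid. (The paper faces the same issue and resolves it by \emph{not} storing palettes: each offline vertex stores only three random shifts plus a degree counter, and the color proposed for its $d$-th edge is a deterministic function of the shift and $d$, so the offline vertex automatically proposes distinct colors for all $\Delta$ of its edges using $O(1)$ words of state; online-side conflicts are then resolved by a perfect-matching argument in a $3$-out ``color graph,'' with the few failures buffered in a set of size $O(n)$ w.h.p.\ and colored offline.) Your parameters are also internally inconsistent: $M$ blocks of size $g$ give $Mg=(\Delta/s)\cdot\sqrt{\Delta s}=\Delta^{1.5}/\sqrt{s}$ total colors, not the claimed $\Theta(\Delta^{1.5}/s+\Delta)$, so even setting aside the palette-size issue the construction does not reach the stated bound.

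Beyond that, the birthday/concentration machinery you describe (pairwise palette intersections, Freedman martingales over color commitments) addresses a question --- whether a shared available color exists for a single arriving edge --- that is not the bottleneck. The actual difficulty under adversarial edge arrivals is that a vertex's edges are spread arbitrarily over the stream while all of its colors must stay mutually distinct under tiny memory. The paper's route is structural rather than concentration-based at the per-edge level: buffer edges into batches of size $\Theta(\sqrt{\Delta})$, randomly offset the batch/group indices per vertex so that each of $s$ groups has maximum degree $O(\Delta/s)$ w.h.p.\ (a Chernoff bound over the independent per-vertex offsets), and run a vertex-arrival subroutine per group, yielding $s\cdot(k/s)\cdot O(\Delta/s)=O(\Delta^{1.5}/s)$ colors in $\widetilde{O}(ns)$ space. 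Some mechanism of this kind --- colors generated on the fly from $O(1)$-word seeds, plus a degree/batch decomposition --- is what your proposal is missing; a static per-vertex palette of size $\widetilde{O}(s)$ cannot work.
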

\end{graytbox}

Setting $s = \sqrt{\Delta}$, we obtain the following corollary, answering the question above:

\begin{graytbox}
\begin{corollary}\label{cor:asym-optimal}
There is a randomized streaming algorithm that with high probability reports a $O(\Delta)$ edge-coloring  under arbitrary edge arrivals using $\widetilde{O}(n \sqrt{\Delta})$ space.
\end{corollary}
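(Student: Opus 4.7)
The plan is to retain the buffering framework of prior work and improve the coloring subroutine by allowing palettes to be reused across flushing events via randomization. Each vertex $v$ maintains a buffer $B_v$ of up to $s$ incident edges that have not yet received a permanent color; once $|B_v|$ first reaches $s$, the algorithm \emph{flushes} $v$ by permanently coloring every edge in $B_v$. The buffers occupy $\widetilde{O}(ns)$ space, and since $\deg(v) \le \Delta$, each vertex is flushed at most $\Delta/s$ times, for a total of $O(n\Delta/s)$ flushing events. The prior deterministic protocol spends $\Theta(\Delta)$ \emph{fresh} palette colors per flush, which explains the $O(\Delta^2/s)$ bound; my plan is to let palettes overlap with a $\sqrt{\Delta}$-factor economy, compensated by randomization.

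I would fix a universe of $K = \Theta(\Delta^{1.5}/s + \Delta)$ colors and, for each vertex $v$, sample a random palette $P_v \subseteq [K]$ of size $p$ chosen so that $p^2/K$ exceeds $2\Delta$ by a constant factor; a Chernoff-type bound on the hypergeometric distribution then gives $|P_v \cap P_u| \ge 2\Delta + \Omega(\Delta)$ for all pairs with high probability. The palettes are represented only implicitly by a short seed into a pseudorandom generator, so they cost only $\widetilde{O}(1)$ bits of persistent storage. When $v$ is flushed with buffered edges $(u_1, v), \ldots, (u_s, v)$, each such edge is assigned a color in $P_v \cap P_{u_i}$ avoiding the colors previously committed at $u_i$ and at $v$; since the intersection exceeds the $\le 2\Delta$ forbidden colors by a constant multiple of $\Delta$, a feasible color always exists in principle.

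The main obstacle is the space-efficient \emph{detection} of the already-committed colors at a neighbor $u_i$ of the flushing vertex: a verbatim record would cost $\Omega(n\Delta)$ bits, blowing the $\widetilde{O}(ns)$ budget by a factor of $\Delta/s$. To resolve this I would store, per vertex $u$, only (i) the colors of the $O(s)$ most recently committed edges at $u$, and (ii) a sparsified $\widetilde{O}(s)$-bit signature of earlier commitments, indexed against the palette $P_u$. During a flush of $v$, each candidate color for $(u_i, v)$ is checked against the recent list and the signature of $u_i$; a handful of retries brings the per-edge conflict probability below $1/\poly(n)$, and any residual failures are absorbed by the additive $\Theta(\Delta)$ reserve colors built into $K$. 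A union bound over all $n\Delta$ edges, combined with the mutual independence of the palette draws, bounds the number of reserve invocations with high probability. The primary technical risk is the coupling between the adversarial arrival order and the random palettes; I plan to defuse this by conditioning on the flush schedule first and only then revealing palette randomness, so that decisions within a flush depend only on palettes of vertices in the current buffer---variables over which the adversary has no anticipatory information.
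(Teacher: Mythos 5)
There is a genuine gap, and it sits exactly at the point you flag as the ``main obstacle'': avoiding the colors already committed at a neighbor $u_i$ without storing them. A vertex $u$ may have up to $\Delta$ previously committed edge colors drawn from a universe of size $K = \Theta(\Delta)$ (for $s = \sqrt{\Delta}$), and knowing which of them are forbidden requires $\Omega(\Delta)$ bits per vertex in the worst case; an $\widetilde{O}(s)$-bit ``sparsified signature'' plus the $O(s)$ most recent colors cannot encode this. Any lossy sketch has false negatives --- candidate colors that look free but were in fact used long ago at $u_i$ --- and a false negative produces an \emph{invalid} coloring, not a low-probability event you can retry away: retrying against the same lossy signature gives no new information, and the ``reserve'' colors do not help because you cannot even detect that a conflict occurred. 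Note also that your palette trick buys little here: to get $|P_v \cap P_u| \ge 2\Delta$ with $K = \Theta(\Delta)$ you need $p = \Omega(\Delta)$, i.e.\ palettes that are a constant fraction of the universe, so the real difficulty was never palette disjointness but the memory of past assignments --- which is precisely the defining obstacle of streaming edge coloring that your scheme leaves unresolved.

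The paper's route avoids remembering history altogether. On the ``offline'' side of a (randomly bipartized) graph, the color proposed for the $d$-th edge of a vertex $v$ is a deterministic function of a per-vertex random shift and a degree counter (three shifted arithmetic progressions modulo $c\Delta$), so distinctness at $v$ holds automatically with only $O(1)$ words stored at $v$. Conflicts on the other endpoint are resolved at the moment all relevant edges are simultaneously in memory: each edge has three candidate colors, and a perfect matching in the resulting ``color graph'' exists with probability $1 - O(1/\Delta^5)$ by a one-sided $3$-out matching lemma; the few failures are buffered and colored offline with a fresh palette. Edge arrivals are then reduced to this vertex/batch-arrival primitive by buffering $\Theta(\sqrt{\Delta})$ edges per vertex and splitting batches, via random cyclic shifts of batch indices, into $\Theta(\sqrt{\Delta})$ subgraphs of maximum degree $O(\sqrt{\Delta})$, each handled by an independent instance with a disjoint $O(\sqrt{\Delta})$-size palette, giving $O(\Delta)$ colors and $\widetilde{O}(n\sqrt{\Delta})$ space. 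If you want to salvage your outline, you need a mechanism with this flavor --- one where validity at a vertex follows structurally (e.g.\ from a counter-indexed sequence) rather than from querying a record of past colors.
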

\end{graytbox}

We show that the space-complexity can be further improved to $\widetilde{O}(n)$ under arbitrary {\em vertex} arrivals. Here instead of edges arriving in an arbitrary order, the vertices of the graph arrive one by one and when a vertex $v$ arrives, all of its edges to the previous vertices are revealed.

\newcommand{\thmvertexarrival}[0]{
There is a randomized streaming algorithm that  $O(\Delta)$ edge colors the graph under arbitrary vertex arrivals using $\widetilde{O}(n)$ space.
}

\begin{graytbox}
\begin{theorem}\label{thm:vertex-arrival}
    \thmvertexarrival{}
\end{theorem}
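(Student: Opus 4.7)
The plan exploits the key structural advantage of the vertex arrival model: when a vertex $v$ arrives, all of its edges to previously arrived vertices are revealed simultaneously as a batch $E_v$. This enables a coordinated coloring of all of $v$'s back-edges in a single step, rather than the edge-by-edge coloring needed for \cref{thm:main}. Fix a palette of $C\Delta$ colors for a sufficiently large constant $C$. The algorithm will maintain, for each processed vertex $u$, a compact state $S_u$ summarizing the colors used on the edges already incident to $u$. When $v$ arrives with its batch $E_v$, the procedure is:
\begin{enumerate}
\item Query the states $S_u$ for each old endpoint $u$ of an edge in $E_v$ to determine which colors are available at $u$.
\item Set up a bipartite matching problem with $E_v$ on one side and colors from $[C\Delta]$ on the other, where an edge $(e, c)$ with $e = (v, u)$ is present iff $c$ is available at $u$, and solve for a perfect matching covering $E_v$.
\item Output the color assignments for the edges in $E_v$ and update the stored states.
\end{enumerate}

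For Step~2 to succeed, the matching must be feasible. Since there are $C\Delta$ colors in total and each old endpoint $u$ has at most $\Delta$ used colors, every edge in $E_v$ sees at least $(C-1)\Delta$ available colors, so Hall's condition holds trivially and a perfect matching exists for $C$ a sufficiently large constant.

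The main obstacle is keeping each state $S_u$ to only $\widetilde{O}(1)$ space, since $u$ may be incident to $\Theta(\Delta)$ colored edges. A naive implementation storing the used-color set has $\Theta(\Delta)$ size per vertex and thus $\Theta(n\Delta)$ total space, which is as large as the input itself. To achieve $\widetilde{O}(n)$ total space, I would employ a palette sparsification argument: each vertex $v$ samples a small random sub-palette $P_v \subseteq [C\Delta]$ of polylogarithmic size upon its arrival, and the algorithm enforces that every edge incident to $v$ uses a color from $P_v$. Under this restriction, the state $S_u$ is always a subset of $P_u$ of polylogarithmic size, so the total storage is $\widetilde{O}(n)$.

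The hardest part of the proof will be showing that the bipartite matching in Step~2 remains feasible with high probability under this palette sparsification, given an adversarially chosen vertex arrival order. I would argue this via a Hall-type concentration argument over the random palette choices, combined with a union bound over all $n$ arrivals and all candidate subsets $S \subseteq E_v$; the key quantity to control is the probability that too many colors of $P_v$ are simultaneously blocked by being already used at the old endpoints. Choosing the constant $C$ large enough (absorbed into the $\widetilde{O}$ notation) accommodates the polylogarithmic losses from this analysis and yields the claimed $O(\Delta)$-coloring within $\widetilde{O}(n)$ space.
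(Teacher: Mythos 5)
There is a genuine gap in your space-reduction step, and it is fatal as stated. Your plan hinges on palette sparsification: each vertex $v$ samples a sub-palette $P_v \subseteq [C\Delta]$ of polylogarithmic size and every edge incident to $v$ is forced to use a color from $P_v$. But edge coloring is not like $(\Delta+1)$ vertex coloring, where each vertex ultimately needs only one color and a polylog sample suffices: a vertex of degree $\deg(v)$ must see $\deg(v)$ \emph{distinct} colors on its incident edges, all of which your rule confines to $P_v$. Hence you need $|P_v| \geq \deg(v)$, which is $\Delta \gg \mathrm{polylog}(n)$ in the worst case, and no choice of the constant $C$ or union-bound bookkeeping can rescue this. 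If instead you only constrain the edge by the \emph{newly arriving} endpoint's palette, then the set of colors already used at an old vertex $u$ is an arbitrary subset of $[C\Delta]$ of size up to $\Delta$ (contributed by many different neighbors' palettes), so the state $S_u$ you must query in Step~1 is not compressible below $\Theta(\Delta)$ words, and you are back to the trivial $\Theta(n\Delta)$ space bound. Either way, the mechanism that was supposed to deliver $\widetilde{O}(n)$ space does not exist; the Hall argument in your Step~2 is fine only in the uncompressed regime where it buys nothing.

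The paper's proof (\cref{lem:onesided} plus the reductions of \cref{lem:bipartization} and \cref{cl:twosided-to-onesided}) avoids exactly this trap by never remembering which colors are used at a vertex. Each offline vertex $v$ stores only three random offsets $r_v^1, r_v^2, r_v^3 \in [c\Delta]$ and a degree counter; the colors it proposes for its $j$-th incident edge are the offsets shifted by the current counter value (modulo $c\Delta$, in three disjoint color ranges). Distinctness of all colors ever used at $v$ is then automatic, because two proposals of the same offset differ by at most $\Delta < c\Delta$ in the shift, so no used-set needs to be stored. Conflicts only have to be resolved among the edges of the newly arriving vertex $u$, which are all present simultaneously; this is done by finding a perfect matching in a ``color graph'' where each edge of $u$ is adjacent to its three proposed colors, and the one-sided $3$-out matching lemma (\cref{lem:kout}) shows such a matching exists with probability $1 - O(1/\Delta^5)$, with the rare failures buffered in a set $S$ of size $O(n)$ w.h.p.\ and colored offline with $\Delta$ fresh colors. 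In short, the missing idea in your proposal is a structured, counter-indexed palette of total size $\Theta(\Delta)$ that is describable in $O(1)$ words per vertex, rather than a small random palette, which cannot support $\Delta$ distinct colors at a single vertex.
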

\end{graytbox}

\subsection{Further Related Work}

Streaming algorithms for edge coloring have also been studied under the extra assumption that the edges arrive in a random order. \citet*{BehnezhadDHKS19} showed that there is a single pass $\widetilde{O}(n)$ space algorithm that obtains a $5.44\Delta$ coloring. \citet*{CharikarL21} later showed the number of colors can in fact be improved to $(1+o(1))\Delta$ under random arrivals while keeping the memory $\widetilde{O}(n)$. Both of these algorithms rely heavily on the random-arrival assumption and do not have any implications for adversarial edge arrivals, which is the focus of this paper.

Online edge-coloring is another related problem. In this problem, the algorithm has no space constraints, but edges arrive one by one and each edge has to be colored upon arrival irrevocably. Note that the greedy algorithm can easily be implemented in the online setting; therefore much of the research has been focused on whether the greedy bound can be improved. For low-degree graphs, \citet*{Bar-NoyMN92} showed that the greedy algorithm is optimal for online edge coloring. However, there has been several improvements over the past few years for graphs of degree at least $\omega(\log n)$ starting from the work of \citet*{CohenPW19}; see \cite{CohenPW19,SaberiW21,BhattacharyaGW21,KulkarniLSST22} and the references therein. The best current bound for general edge arrivals is a beautiful $\frac{e}{e-1}\Delta$ coloring algorithm of \citet*{KulkarniLSST22}. Whether the number of colors can be improved to $(1+o(1))\Delta$ for $\Delta = \omega(\log n)$ under arbitrary edge arrivals remains an important open problem.

\paragraph{Independent work:} In a concurrent and independent work, \citet*{concurrent} have also considered edge colorings in the streaming setting. In particular, they also achieve the same bounds as in our \Cref{cor:asym-optimal} and \Cref{thm:vertex-arrival}. A smooth color/space trade-off is also presented in \cite{concurrent}. In particular, for any $1\leq s \leq \sqrt{\Delta}$, they achieve an $\widetilde{O}(ns)$ space algorithm that colors the graph using $\widetilde{O}(\Delta^2/s^2)$ colors.\footnote{In their paper this is equivalently stated as an $O(\Delta t)$ coloring using $\widetilde{O}(n \sqrt{\Delta/t})$ space.} This can be compared to our \cref{thm:main} which uses $O(\Delta^{1.5}/s)$ colors with $\widetilde{O}(ns)$ space. We note that $\Delta^2/s^2 \geq \Delta^{1.5}/s$ for the whole range of $s$. So our space/color trade-off never uses more colors. But for small $s$ our algorithm uses fewer colors. For instance, in the extreme case of $\widetilde{O}(n)$ space algorithms (i.e. with $s = 1$), our algorithm uses $O(\Delta^{1.5})$ colors whereas that of \cite{concurrent} requires $O(\Delta^2)$ colors.  Finally, we note that while we only considered simple graphs and randomized algorithms in this work, \cite{concurrent} also extend their algorithms to multi-graphs and derandomize them.

\clearpage 





\subsection{Preliminaries}

Unless otherwise stated, we use $G=(V, E)$ to denote the input graph. We use $n := |V|$ and $m := |E|$ to respectively denote the number of vertices and edges in $G$. We use $\Delta$ to denote the maximum degree of the graph $G$. 
For any integer $k$, we use $[k]$ to denote the set $\{1, \ldots, k\}$. 

\paragraph{The graph streaming model:}
In the standard graph streaming model, edges of an arbitrary graph $G$ arrive one by one in an arbitrary order. The algorithm has a space much smaller than the total number of edges, can take few---preferrably just one---pass over the input, and should return the output. In the case of edge coloring, the output size is as large as the input, making it impossible to store the entire output and return it all at once. Therefore, we allow the algorithm to output the solution in a streaming manner as well. This model is also referred to as the ``W-streaming'' model in the literature \cite{Glazik2017, Demetrescu10}. All of the algorithms presented in this paper take only a single pass over the input. We measure the space in the number of words, each consisting of $\Theta(\log n)$ bits. 

Our algorithms for the general edge arrival model build on algorithms that we develop for two more restrictive arrival models of {\em general vertex arrivals} and {\em one-sided vertex arrivals in bipartite graphs}. We present the definition of these standard models below.

\begin{definition}[vertex arrival model]\label{def:vertex-arrival}
In this model, vertices of the input graph $G$ arrive one by one according to some arbitrary permutation $\pi$. Upon arrival of a vertex, all of its edges to previous vertices in the permutation $\pi$ arrive.
\end{definition}

\begin{definition}[one-sided vertex arrivals in bipartite graphs]\label{def:one-sided-vertex-arrival}
In this model, the input graph $G$ is assumed to be bipartite with vertex sets $U$ and $V$. The ``offline'' vertices in $V$ are present from the beginning, but the ``online'' vertices in $U$ arrive one by one in an arbitrary order. Every time an online vertex $u$ arrives, all of its edges to the offline vertices $V$ are revealed.
\end{definition}

In our proofs, we use the following standard variant of the Chernoff bound.

\begin{proposition}[Chernoff bound] \label{prep:chernoff}
Let $X_1, ..., X_n$ be independent random variables in $[0, 1]$. Let $X = \sum_{i=1}^n X_i$ and $\mu = \E[X]$. Then for all $\delta \geq 0$ and $\mu' \geq \mu$, $
    \Pr[X \geq (1 + \delta)\mu'] \leq \exp \left( -\frac{\delta^2}{2 + \delta} \mu' \right).
    $
\end{proposition}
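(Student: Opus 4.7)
The plan is to follow the standard exponential-moment (Chernoff) recipe and then reduce the tail bound to a one-variable calculus inequality. First, I would fix any $t>0$ and apply Markov's inequality to the nonnegative random variable $e^{tX}$, obtaining
\[
\Pr[X \geq (1+\delta)\mu'] \;=\; \Pr\bigl[e^{tX} \geq e^{t(1+\delta)\mu'}\bigr] \;\leq\; \frac{\E[e^{tX}]}{e^{t(1+\delta)\mu'}}.
\]
By independence, $\E[e^{tX}] = \prod_{i=1}^n \E[e^{tX_i}]$, so the task reduces to bounding each factor.

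Next, I would use the fact that on $[0,1]$ the convex function $x \mapsto e^{tx}$ lies below the secant connecting $(0,1)$ and $(1,e^t)$, giving $e^{tX_i} \leq 1 + (e^t-1)X_i$ pointwise. Taking expectations and then using $1+y \leq e^y$ yields $\E[e^{tX_i}] \leq \exp\bigl((e^t-1)\E[X_i]\bigr)$, and multiplying over $i$ gives $\E[e^{tX}] \leq \exp\bigl((e^t-1)\mu\bigr)$. Since $e^t - 1 \geq 0$ for $t \geq 0$ and $\mu' \geq \mu$, I can replace $\mu$ by $\mu'$ to obtain
\[
\Pr[X \geq (1+\delta)\mu'] \;\leq\; \exp\Bigl( \bigl((e^t-1) - t(1+\delta)\bigr)\mu' \Bigr).
\]
Choosing $t = \ln(1+\delta)$ (which minimizes the exponent) simplifies this to $\bigl(e^\delta/(1+\delta)^{1+\delta}\bigr)^{\mu'}$.

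The remaining step is to show this is at most $\exp\bigl(-\tfrac{\delta^2}{2+\delta}\mu'\bigr)$, i.e.\ the one-variable inequality
\[
(1+\delta)\ln(1+\delta) - \delta \;\geq\; \frac{\delta^2}{2+\delta} \quad \text{for all } \delta \geq 0.
\]
I would prove this by clearing denominators and defining $g(\delta) := (1+\delta)(2+\delta)\ln(1+\delta) - 2\delta - 2\delta^2$, then checking $g(0)=0$, computing $g'(0)=0$, $g''(0)=0$, and verifying $g'''(\delta) = (1+2\delta)/(1+\delta)^2 \geq 0$ for $\delta \geq 0$. Integrating this chain of inequalities three times gives $g(\delta)\geq 0$, which is exactly what is needed.

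The main obstacle is purely the last calculus step: the inequality $\frac{e^\delta}{(1+\delta)^{1+\delta}} \leq e^{-\delta^2/(2+\delta)}$ does not admit an elegant algebraic manipulation, so the differentiation-based argument (or an equivalent Taylor-remainder estimate) is essentially unavoidable. Everything else is mechanical application of Markov, independence, and the tangent-line-of-$e^{tx}$ bound.
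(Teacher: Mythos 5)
Your proof is correct. Note that the paper itself gives no proof of this proposition --- it is stated in the preliminaries as a standard variant of the Chernoff bound and simply cited as known --- so there is no argument of the paper's to compare yours against; what you have written is the canonical moment-generating-function derivation. The only step requiring any care is the final calculus inequality $(1+\delta)\ln(1+\delta) - \delta \geq \delta^2/(2+\delta)$, and your treatment of it checks out: with $g(\delta) = (1+\delta)(2+\delta)\ln(1+\delta) - 2\delta - 2\delta^2$ one indeed gets $g(0)=g'(0)=g''(0)=0$ and $g'''(\delta) = (1+2\delta)/(1+\delta)^2 \geq 0$, so $g \geq 0$ on $[0,\infty)$ and the stated tail bound follows. (The degenerate case $\delta = 0$, where $t=\ln(1+\delta)=0$, is harmless since the claimed bound is then trivially $1$.)
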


\section{Overview of Techniques}

In this section, we give an informal high-level overview of our algorithms. 

As we discussed, the main challenge in solving the streaming edge-coloring problem is that the algorithm cannot ``remember'' all the colors that we assign to the edges as this takes too much space. This turns out to be a challenge particularly when the degrees evolve unevenly. To convey the key intuitions in this section, let us first focus on the one-sided vertex arrival model in bipartite graphs (\cref{def:one-sided-vertex-arrival}). We note that even in this restricted arrival model, the best known algorithm from the literature remains to be those of \cite{BehnezhadDHKS19,CharikarL21,AnsariSZ22} which require $O(\Delta^2)$ colors with $O(n)$ space. Here, we describe how this can be improved to an asymptotically optimal $O(\Delta)$ coloring with only $\widetilde{O}(n)$ space. Our final algorithm of \cref{thm:main} in the more general edge-arrival model builds on this vertex-arrival algorithm.

Since all edges of an online vertex arrive at the same time, it is not hard to ensure they  receive different colors. What is challenging is to do so while ensuring that all edges of an {\em offline} vertex receive different colors too. Towards this, we first describe an algorithm that uses $O(\Delta \log n)$ colors, $O(n \Delta)$ ``pre-processing space,'' and $O(n)$ working space. We then show how this can be turned into an $O(\Delta)$ coloring algorithm that uses $\widetilde{O}(n)$ space overall.

\paragraph{An algorithm with $O(\Delta \log n)$ colors but large space:} Let $K = \Theta(\Delta \log n)$ be the number of colors we use. In the pre-processing step (i.e., before seeing any edges of the graph), for every offline vertex $v \in V$, we store a random permutation $\pi_v$ of the colors $\{1, \ldots, K\}$ in memory which overall takes $\widetilde{O}(n \Delta)$ pre-processing space. Now suppose that the first online vertex $u$ arrives. For each of its offline neighbors $v_i$, we consider the first random color $\pi_{v_i}(1)$ of $v_i$. Since the random permutations of offline vertices are independent, when $\Delta \geq 2$ we get the color chosen for $v_i$ is different from $\pi_{v_j}(1)$ for all other neighbors $v_j$ of $u$ with probability  $(1-\frac{1}{K})^{\deg(u)} \geq (1-\frac{1}{\Delta})^\Delta \geq 0.25$. Note that $(1-\frac{1}{\Delta})^\Delta$ is a monotonically increasing and for $\Delta = 2$ the value is $0.25$. If this happens, we assign color $\pi_{v_i}(1)$ to edge $(u, v_i)$. If the first color of $v_i$ is not unique, we discard it and reveal the next color $\pi_{v_i}(2)$ of $v_i$. Every round of this process successfully colors a constant fraction of the remaining uncolored edges of $u$. Thus it takes $O(\log n)$ rounds to color all edges of $u$ w.h.p. On the other hand, since we have reserved $K = \Theta(\Delta \log n)$ colors for each offline vertex, we can afford to reveal up to $\Theta(\log n)$ colors for each of their edges, hence fully coloring the graph w.h.p. To implement this algorithm, we only need to maintain a counter $d_v$ for every offline vertex $v$ on how many colors of its random permutation we have revealed so far, which can be done with $O(n)$ total working memory.

\paragraph{Reducing space:} To get rid of the huge pre-processing space of $\widetilde{O}(n \Delta)$ in the previous algorithm, we limit the amount of randomness needed. To do so, instead of choosing a fully random permutation of $\{1, \ldots, K\}$ for every offline vertex $v$, which requires $\Theta(n K) = \widetilde{\Theta}(n \Delta)$ space, we just pick a random number $r_v \in [K]$ and use the randomly {\em shifted} permutation $(r_v, r_v + 1, \ldots, K, 1, \ldots, r_v - 1)$. The advantage of doing so is that this shifted permutation can be stored using $O(1)$ space, by just storing the random number $r_v$ in memory. Its downside is that lack of independence breaks the analysis above. For example, if $\pi_{v_i}(1) = \pi_{v_j}(1)$ then we also have $\pi_{v_i}(2) = \pi_{v_j}(2)$ and so cannot argue that every round colors a constant fraction of the edges of the online vertex $u$. To fix this, we take $t = \Theta(\log n)$ random numbers $r^1_v, \ldots, r^t_v$ for each offline vertex $v$ and considering the $t$ shifted permutations
$$
[r^1_v, \ldots, \Delta, 1, \ldots, r^1_v - 1], \ldots, [r^t_v, \ldots, \Delta, 1, \ldots, r^t_v - 1].
$$
Now if the first color choice of $v_i$ and $v_j$ according to the first shifted permutation are the same, we consider the second shifted permutations, then the thirds, and so on and so forth. This gets rid of the dependence between the colors proposed for the edges of an online vertex, but not the edges of an offline vertex. Luckily, the latter is not needed for the analysis to go through and we can implement this algorithm with $\widetilde{O}(n)$ space overall.

\paragraph{Reducing colors via $k$-out sampling:} The algorithm above is greedy in that we first reveal a random color for each edge of the online vertex $u$, greedily color those whose proposed colors are unique, then reveal the next batch of proposals. Instead of this greedy algorithm which takes up to $O(\log n)$ rounds, inevitably stretching the number of colors to $O(\Delta \log n)$, we first draw 3 random colors $x^1_i, x^2_i, x^3_i$ for each edge $e_i = (u, v_i)$ of the online vertex $u$ and consider all these 3 colors {\em at the same time}. We show that with probability $1-1/\poly(\Delta)$, it is possible to pick one of the colors $x^1_i, x^2_i, x^3_i$ for each edge $e_i$ of $u$ such that all edges of $u$ receive different colors. Our proof of this theorem builds on a new lemma (\cref{lem:kout}) that we prove on the existence of perfect matchings in a {\em one-sided random $k$-out model}. This lemma, which might be of independent interest, says that if we have a random bipartite graph with vertex sets $V$ and $U$, and every vertex in $V$ is made adjacent to exactly $k$ vertices in $U$ chosen uniformly, then $G$ has a perfect matching provided that $k \geq 3$ and $|U| > e|V|$. While $k$-out sampling has been used recently in obtaining more efficient algorithms for distributed graph connectivity \cite{HolmKTZZ19} and minimum cuts \cite{Ghaffari0T20,AssadiD21}, this is to our knowledge its first application in graph coloring.

\paragraph{From vertex-arrivals to edge-arrivals:} We now overview how we go from the vertex arrival model to the more general edge arrival model. To convey the key intuitions, we focus only on the simpler special case of \cref{thm:main} where $s = 1$. That is,  an algorithm that uses $O(n)$ space and returns an $O(\Delta^{1.5})$ edge coloring. Our first step is to generalize the algorithm above to work in the non-bipartite vertex arrival model (\cref{def:vertex-arrival}). This step, in fact, follows more or less from a known random bipartization technique of the literature that we present in \cref{apx:bipartization}. Our next, more challenging, step is to generalize the algorithm to a {\em batch} arrival setting, where at each step instead of all of the edges of a vertex, we see some $\Theta(\sqrt{\Delta})$ edges of it (without having any guarantee about when the next batch of this vertex arrives). Once we achieve this more general algorithm, we run an instance $\mc{A}$ of it. To feed our edges to this batch arrival algorithm $\mc{A}$, we keep storing edges in a set $H$. Whenever the number of edges in $H$ reaches $n$, we look at the vertex $v$ with the largest degree in $H$. If $\deg_H(v) \geq \sqrt{\Delta}$, we feed $\Theta(\sqrt{\Delta})$ of these edges of vertex $v$ as the next batch to algorithm $\mc{A}$ and remove them from $H$. Otherwise, the maximum degree in $H$ is less than $\sqrt{\Delta}$; in this case, we edge color all edges in $H$ greedily using $O(\sqrt{\Delta})$ colors and remove them all from $H$. Every time that we color $H$ greedily, we color $n$ edges of the graph. Therefore this happens at most $m/n = O(\Delta)$ times, requiring a total of $O(\Delta^{1.5})$ colors. The proof of \cref{thm:main} for larger values of $s$ requires a more involved white-box application of the vertex-arrival algorithm; see \cref{sec:edge-arrival} for the details of the algorithm.

\section{Streaming Edge Coloring Under Vertex Arrivals}\label{sec:vertex-arrival}

In this section, we start by proving \cref{thm:vertex-arrival}, restated below, for the streaming edge-coloring under vertex arrivals. Our algorithm in the more general edge-arrival model builds on the vertex-arrival algorithm that we describe in this section.

\restatethm{\ref{thm:vertex-arrival}}{\thmvertexarrival{}}

\subsection{Basic Reductions}

We start with two basic reductions that essentially reduce the edge coloring problem in general graphs under vertex arrivals to the same problem in bipartite graphs under one-sided vertex arrivals.

The following lemma asserts that instead of general graphs, we can focus on bipartite graphs. The idea is to randomly partition the vertex set $V$ into two sets $A$ and $B$, edge color the bipartite subgraph of $G$ between $A$ and $B$, and recurse on the induced subgraphs $G[A]$ and $G[B]$. Since the idea is standard and follows from known reductions, we present its proof in \hyperref[appendix:a]{Appendix A}.

\begin{lemma}\label{lem:bipartization}
Suppose there is a streaming algorithm that edge colors any $n$-vertex bipartite graph of maximum degree $\Delta$ with $f(\Delta) \geq \Delta$ colors using $s(n, \Delta)$ words of space. Then there is a streaming algorithm that edge colors any general (i.e., not necessarily bipartite) $n$-vertex graph of maximum degree $\Delta$ using
$$
f(\Delta) + f(\Delta/2) + f(\Delta/4) + ... + 2f(1)
$$
colors and $\widetilde{O}(n + s(n, \Delta))$ space. If the algorithm for bipartite graphs works under vertex arrivals (resp. edge arrivals), then the algorithm for general graphs also works under vertex arrivals (resp. edge arrivals).
\end{lemma}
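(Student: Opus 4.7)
The plan is to apply random bipartization iteratively in a recursive fashion. At the outset, sample a uniformly random partition $V = A \sqcup B$; this can be implemented with $\widetilde{O}(1)$ space using a $\Theta(\log n)$-wise independent hash function $h:V\to\{0,1\}$ so that the label of each vertex can be reproduced consistently throughout the stream (storing the partition explicitly costs $O(n)$ space, which is already absorbed into the $\widetilde{O}(n+s(n,\Delta))$ budget). As the stream progresses, we route each edge to one of three sub-streams: edges crossing the cut are fed to a streaming bipartite edge-coloring instance $\mathcal{A}_0$ on $G[A,B]$ using a fresh palette $P_0$ of $f(\Delta)$ colors; edges with both endpoints in $A$ are forwarded to a recursive call on $G[A]$, and likewise for $G[B]$. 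Because $G[A]$ and $G[B]$ are vertex-disjoint, the two recursive calls at the next level can share a single fresh palette $P_1$ of $f(\Delta/2)$ colors; more generally at recursion depth $i$ the $2^i$ surviving subinstances are pairwise vertex-disjoint and may share a palette $P_i$. Palettes across different depths are disjoint, so the overall coloring is proper.

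The first technical step is to bound the maximum degree $\Delta_i$ in any subinstance at depth $i$. Applying \cref{prep:chernoff} to each vertex and union-bounding over all $n\cdot\lceil\log_2\Delta\rceil$ (vertex, level) pairs shows that, with high probability, $\Delta_i \leq (1+o(1))\Delta/2^i$ as long as $\Delta/2^i = \omega(\log n)$. Once $\Delta/2^i$ drops to $O(\log n)$ only $O(\log\log n)$ additional levels remain and degrees stay $O(\log n)$ w.h.p., so the recursion terminates cleanly at the $\Delta=O(1)$ base case (which is solved by storing the whole subgraph and greedily coloring it). Using monotonicity of $f$, the total palette size is then at most
\[
\sum_{i=0}^{\lceil \log_2 \Delta\rceil} f\!\bigl(\lceil\Delta/2^i\rceil\bigr) \;\leq\; f(\Delta)+f(\Delta/2)+\cdots+2f(1),
\]
matching the claimed bound (the $2f(1)$ term absorbs the $O(1)$ base levels).

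For space, at each level the subinstances are vertex-disjoint and collectively contain at most $n$ vertices, so they can be run in parallel using at most $s(n,\Delta/2^i)$ words of working memory at level $i$; since $s$ is monotone in $\Delta$ and the $\Delta$'s decay geometrically, the sum over $O(\log\Delta)$ levels is dominated by $s(n,\Delta)$, giving the $\widetilde{O}(n+s(n,\Delta))$ total. The reduction also preserves the arrival model: under vertex arrivals, when $v$ arrives we read $h(v)$ and partition its edges to previously arrived vertices by their labels, feeding each group to the bipartite instance or the appropriate recursive subinstance (in which $v$ then arrives with those specific edges); under edge arrivals, each edge is simply routed on the fly by its endpoints' labels.

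The main obstacle is justifying that the Chernoff concentration holds simultaneously at every level of the recursion on top of a single partition; this is handled by the union bound above, together with the observation that the partitions at deeper levels are fresh and independent of earlier ones (using independent hash functions for each level). All remaining steps, tracking palettes, routing edges, summing space across levels, are bookkeeping.
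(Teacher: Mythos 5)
Your proposal follows essentially the same route as the paper's proof: recursive random bipartization, Chernoff concentration giving maximum degree $(1/2+o(1))$ times the previous level's degree (valid while degrees are $\omega(\log n)$), disjoint palettes across levels (shared among the vertex-disjoint subinstances of a level), and on-the-fly routing that preserves the arrival model and the space budget. The one imprecision is the claim that the recursion ``terminates cleanly at the $\Delta=O(1)$ base case'': once degrees drop to $O(\log n)$ the halving is no longer guaranteed w.h.p.\ (so the per-level palette $f(\Delta/2^i)$ could be exceeded), and the clean fix --- which the paper uses and which is already implicit in your base-case treatment --- is to stop the recursion at degree $\Theta(\log n)$ and store the remaining $\widetilde{O}(n)$-edge subgraph, coloring it offline.
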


We emphasize that to solve general graphs under vertex arrivals, we have to ensure that the bipartite algorithm that we feed into \cref{lem:bipartization} works under two-sided vertex arrivals. The next reduction shows that any one-sided vertex arrival algorithm also solves two-sided vertex arrivals by losing a factor of 2 in the number of colors.

\begin{claim}\label{cl:twosided-to-onesided}
    Suppose there is a streaming algorithm that edge colors any $n$-vertex bipartite graph of maximum degree $\Delta$ under one-sided vertex arrivals using $f(\Delta)$ colors. Then this can be turned into an algorithm that $2f(\Delta)$ colors the bipartite graph under two-sided vertex arrivals using asymptotically the same space.
\end{claim}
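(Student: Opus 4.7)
The plan is to simulate the two-sided arrival algorithm by running two disjoint copies of the one-sided algorithm, one for each ``direction'' of lateness, and combining their outputs on disjoint color palettes. The guiding observation is this: in the two-sided vertex arrival model, whenever a vertex $w$ arrives, it exposes all of its edges to previously arrived vertices on the opposite side of the bipartition. So $w$ behaves exactly like an online vertex with respect to those edges, even though not all of $w$'s ultimate edges have arrived yet.

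Concretely, I would partition the edges of $G=(U,V,E)$ into two classes: let $E_U$ be the set of edges $(u,v)$ with $u \in U$ arriving strictly later than $v$, and let $E_V$ be the edges $(u,v)$ with $v \in V$ arriving strictly later than $u$ (ties broken arbitrarily). Note that every edge lies in exactly one class, and that both $(U, V, E_U)$ and $(U, V, E_V)$ are subgraphs of $G$, hence have maximum degree at most $\Delta$. I would then instantiate two independent copies of the promised one-sided algorithm, $\mathcal{A}_U$ with palette $\{1,\dots,f(\Delta)\}$ treating $U$ as the online side, and $\mathcal{A}_V$ with a disjoint palette $\{f(\Delta)+1,\dots,2f(\Delta)\}$ treating $V$ as the online side.

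The feeder is simple: whenever a vertex $w$ arrives in the two-sided stream along with its edges to previously arrived opposite-side vertices, if $w \in U$ I feed those edges to $\mathcal{A}_U$ as the arrival of online vertex $w$, and if $w \in V$ I feed them to $\mathcal{A}_V$ as the arrival of online vertex $w$. From $\mathcal{A}_U$'s point of view, $w \in U$ arrives exactly once with some set of offline neighbors in $V$; this is a valid one-sided input because the offline side $V$ is statically fixed (all offline vertex identifiers are available by maintaining an $O(n)$-space index, which is subsumed in $\widetilde{O}(n)$ space). The same holds for $\mathcal{A}_V$. Each copy thus returns a proper edge coloring of its subgraph using $f(\Delta)$ colors, and the colors are emitted in a streaming fashion at the moment their endpoint arrives, as required.

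For correctness, consider any vertex $x$; its edges split into those in $E_U$ and those in $E_V$. The $E_U$-edges incident to $x$ receive distinct colors from the first palette (this is ensured by $\mathcal{A}_U$ regardless of whether $x \in U$ or $x \in V$, since $\mathcal{A}_U$ properly colors the whole bipartite subgraph $(U,V,E_U)$), and the $E_V$-edges receive distinct colors from the second palette by the same reasoning applied to $\mathcal{A}_V$. Since the two palettes are disjoint, no conflict can arise across the classes, and all $\deg_G(x)$ edges of $x$ get distinct colors, using $2f(\Delta)$ colors in total. The only subtle step, and the one I would double-check in a full write-up, is that when $w$ arrives in the two-sided stream we can legitimately present its revealed edges to the corresponding one-sided algorithm as a single online arrival; this works because by construction all currently revealed edges at $w$ are the edges in $E_U$ (or $E_V$) incident to $w$, and the one-sided algorithm is oblivious to future arrivals on either side. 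The extra working space is $O(n)$ for the vertex-identifier bookkeeping, so the total space is asymptotically that of the two one-sided copies, as claimed.
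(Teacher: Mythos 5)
Your proposal is correct and matches the paper's proof: both partition the edges according to which side's endpoint arrives later, run two parallel instances of the one-sided algorithm (one with $U$ online, one with $V$ online) on disjoint palettes, and observe that each subgraph has maximum degree at most $\Delta$, giving $2f(\Delta)$ colors with only a constant-factor space overhead.
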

\begin{proof}
Let $G=(V, U, E)$ be the bipartite graph under two-sided vertex arrivals. We partition its edges into two subgraphs $G_U$ and $G_V$. If a vertex in $V$ arrives, we add its edges to $G_V$, and if a vertex in $U$ arrives, we add its edges to $G_U$. This way, $G_V$ (resp. $G_U$) will be a bipartite graph under one-sided vertex arrivals with the online part being vertex set $V$ (resp. $U$). Since both $G_U$ and $G_V$ will have maximum degree upper bounded by $\Delta$ (as they are subgraphs of $G$), we can run one instance of \cref{lem:onesided} on $G_U$ and one on $G_V$ in parallel using disjoint colors. This only multiplies the number of colors by two and the space by two, finishing the proof.
\end{proof}

\subsection{The Algorithm}

The reductions of the previous section show that instead of general graphs, we can focus on bipartite graphs under one-sided vertex arrivals. The following lemma is our main contribution in the rest of \cref{sec:vertex-arrival}.

\begin{lemma}\label{lem:onesided}
    There is a streaming algorithm that edge colors any bipartite graph of maximum degree $\Delta$, using $O(\Delta)$ colors under one-sided vertex arrivals and uses $O(n)$ space w.h.p. (The coloring is always valid and the space-bound holds with probability $1 - 1/\poly(n)$.) 
\end{lemma}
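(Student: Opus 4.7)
The plan is to adapt the greedy $O(\Delta \log n)$-color algorithm sketched in the overview by replacing its $\Theta(\log n)$-round unfolding with a single-shot ``$3$-out'' selection, reducing the palette to $K = O(\Delta)$. Concretely, pick a sufficiently large constant $C$, set $K = 3C\Delta$, and split $[K]$ into three disjoint sub-palettes $P_1, P_2, P_3$ of size $C\Delta$ each. As pre-processing, draw independently for every offline vertex $v \in V$ and every $\ell \in \{1,2,3\}$ a uniformly random shift $r_v^{\ell} \in [C\Delta]$, and maintain an integer counter $d_v^{\ell}$ initialized to $0$. Together $(r_v^{\ell}, d_v^{\ell})$ implicitly encode a shifted permutation $\pi_v^{\ell}$ of $P_\ell$ that will be consumed one position at a time.

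When an online vertex $u$ arrives with incident edges $e_1 = (u, v_1), \ldots, e_d = (u, v_d)$ (so $d \leq \Delta$), compute for each $i \in [d]$ and each $\ell \in \{1,2,3\}$ the candidate color $x_i^{\ell} = \pi_{v_i}^{\ell}(d_{v_i}^{\ell} + 1)$ and then increment $d_{v_i}^{\ell}$ by $1$. Build the auxiliary bipartite graph $H$ with left side $\{e_1, \ldots, e_d\}$ and right side $[K]$, in which $e_i$ is adjacent to $x_i^1, x_i^2, x_i^3$, and attempt to compute a perfect matching saturating the left side via any standard bipartite matching subroutine; on success output each $e_i$ paired with its matched color. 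On failure---which, as argued below, occurs with probability at most $1/\poly(n)$ across all $n$ arrivals---fall back to a procedure that stores the remaining stream and recolors deterministically using the greedy algorithm; this preserves validity of the coloring at the cost of extra space only in the rare failure event, matching the form of the lemma statement.

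For correctness, two edges of $u$ receive different colors because the output is a matching. For two edges incident to a common offline vertex $v$: if they are matched through different sub-palettes their colors lie in disjoint $P_\ell$'s and are distinct; if they are matched through the same sub-palette $\pi_v^{\ell}$ they occupy different positions of that permutation (since $d_v^{\ell}$ strictly increments with each edge of $v$) and hence are again distinct. What remains is to show that a perfect matching in $H$ exists with probability $1 - 1/\poly(\Delta)$. For this, observe that each counter $d_v^{\ell}$ is a deterministic function of the arrival sequence alone (it just counts how many edges of $v$ have been seen), whereas the shifts $\{r_v^{\ell}\}_{v,\ell}$ are mutually independent and uniform. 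Consequently the candidates $\{x_i^{\ell}\}_{i,\ell}$ are mutually independent, each uniform on its sub-palette: independence across different $v_i$'s follows from the independence of shifts at different offline vertices, and independence across $\ell$ for a fixed $v_i$ follows from the independence of the three shifts of $v_i$. Thus $H$ is distributed as a one-sided random bipartite graph in which each of the $d \leq \Delta$ left vertices picks three independent uniform neighbors, one per sub-palette of size $C\Delta$, and the $k$-out matching lemma (proved later in the paper and foreshadowed in the overview) then guarantees a perfect matching with probability $1 - 1/\poly(\Delta)$; a union bound over the at most $n$ online vertices controls the overall failure probability.

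For space, every offline vertex stores $O(1)$ words (three shifts and three counters), yielding $O(n)$ persistent memory; the transient workspace to process a single online vertex $u$ is $O(d + K) = O(\Delta) = O(n)$ and is released before the next arrival. The main obstacle in executing this plan is proving the $k$-out matching lemma with a right-hand side as small as $O(|V|)$ (ideally $|U| > e|V|$), and moreover adapting it to our structured ``one uniform neighbor per sub-palette'' model rather than the plain uniform $k$-out distribution; a weaker bound would inflate $K$ and defeat the asymptotically optimal color count. Once that lemma is in place, combining \cref{lem:onesided} with the reductions in \cref{lem:bipartization,cl:twosided-to-onesided} immediately yields \cref{thm:vertex-arrival}.
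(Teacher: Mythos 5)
Your construction is essentially the paper's (three shifted-permutation proposals per offline vertex, a color graph per online vertex, a perfect matching found via a $3$-out-style lemma, and the same mod-arithmetic argument for offline validity), but the failure handling has a genuine gap. The $k$-out lemma only gives success probability $1-1/\poly(\Delta)$ \emph{per online vertex}, so a union bound over $n$ arrivals yields failure probability $n/\poly(\Delta)$, which is $1/\poly(n)$ only when $\Delta = n^{\Omega(1)}$. For small or moderate $\Delta$ (say $\Delta = \polylog(n)$ or $\Delta = n^{0.1}$), some online vertex fails to admit a perfect matching with high probability, and your fallback of ``storing the remaining stream'' then keeps up to $\Theta(n\Delta)$ edges, destroying the $O(n)$ space bound exactly in the regime where failures are likely. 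So as written, the space guarantee of \cref{lem:onesided} does not follow.

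The paper avoids this by never aborting: when a vertex $u$ has no valid assignment, only $u$'s (at most $\Delta$) edges are put into a buffer $S$, and $S$ is colored at the end with $\Delta$ fresh colors. The space argument then has to show $|S| = O(n)$ w.h.p., and this is not a union bound: the failure indicators of different online vertices are \emph{not} independent (two online vertices sharing offline neighbors see shifted versions of the same proposals), so the paper vertex-colors a dependency graph of maximum degree $\Delta^2$, applies Chernoff within each independent color class, handles small classes deterministically, and splits into cases depending on whether $\Delta$ is below or above roughly $(n/\log n)^{1/4}$ (\cref{cl:space-highprob}). This concentration step is entirely missing from your proposal and is the crux of the space bound. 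A secondary, smaller point: your ``one uniform neighbor per sub-palette'' graph is not the distribution in \cref{lem:kout}; the paper sidesteps this by drawing three \emph{distinct} shifts from a single range $[c\Delta]$ and defining the color graph modulo $c\Delta$, so the lemma applies verbatim, whereas your variant would need its own (admittedly similar) Hall-condition proof, which you flag but do not supply.
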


Let us first see how \cref{lem:onesided} proves \cref{thm:vertex-arrival} using the reductions of the previous section.

\begin{proof}[Proof of \cref{thm:vertex-arrival}]
    First, we feed the algorithm of \cref{lem:onesided} to \cref{cl:twosided-to-onesided} to obtain a streaming algorithm that for some constant $C$, $C(\Delta)$ colors a bipartite graph under two-sided vertex arrivals with $O(n)$ space. We then plug in this algorithm into \cref{lem:bipartization} to obtain an algorithm for general graphs. The algorithm uses $\widetilde{O}(n)$ space and the number of its colors by \cref{lem:bipartization} is
    $$
        C\Delta + C(\Delta/2) + C(\Delta/4) + \ldots + C \leq 2C\Delta = O(\Delta).\qedhere
    $$
\end{proof}

So it only remains to prove \cref{lem:onesided}. In this section, we present the algorithm formalized below as  \cref{alg:one}. We analyze the space complexity of the algorithm in \cref{sec:space} and analyze its correctness and the number of used by it in \cref{sec:correctness}. Finally, we show in \cref{sec:batch} that the algorithm can be generalized to a batch arrival model as well.

\begin{figure}[h]
\begin{protocol}\label{alg:one}
    Streaming edge coloring for one-sided vertex arrivals in bipartite graphs.

    \smallskip\smallskip
    \textbf{Parameter:} $c := 2.72$
    \begin{enumerate}
        \item For each offline vertex $v$ draw $3$ {\em distinct} random numbers $r_v^1, r_v^2$ and $r_v^3$ from $[c\Delta]$ uniformly and store them in memory.\footnote{For brevity, here and in the rest of the analysis we assume that $c \Delta$ is an integer. If it is not, $c\Delta$ must be replaced by $\lceil c\Delta \rceil$.} 
        \item Additionally, for each offline vertex $v$, we store a counter  $\deg_v$ in memory to keep track of its degree as the edges arrive.
        \item Upon arrival of an online vertex $u$:
        \begin{enumerate}
            \item For each edge $e_i = (u, v_i)$ consider the following three colors $x^1_i$, $x^2_i$, $x^3_i$:
            \begin{flalign*}
                x^1_i &:= \big((r^1_{v_i} + \deg_{v_i}) \bmod c \Delta\big),\\
                x^2_i &:= \big((r^2_{v_i} + \deg_{v_i}) \bmod c \Delta\big) + c\Delta,\\
                x^3_i &:= \big((r^3_{v_i} + \deg_{v_i}) \bmod c \Delta \big) + 2c\Delta.
            \end{flalign*}\label{alg1_colordefine}
            \vspace{-0.6cm}
            \item If there is an assignment of colors to edges of $u$ such that each edge $e_i$ receives a color from $\{x^1_i, x^2_i, x^3_i\}$ and all the edges of $u$ receive different colors, then we assign these colors, stream them out, and remove edges of $u$ from memory. Otherwise, we add all edges of $u$ to set $S$ which we store in memory. \label{alg1_findmatching}

            (In \cref{cl:notmanyfails} we show $u$'s edges are successfully colored with probability $1-O(1/\Delta^5)$.)
        \end{enumerate}
        \item It only remains to color the edges in $S$. Note that since $S$ is a subgraph of $G$, its maximum degree is no larger than $\Delta$. Thus we can edge-color it using $\Delta$ fresh colors via existing offline edge-coloring algorithms for bipartite graphs; see e.g. \cite{gabow1985algorithms}. \label{alg1_end}
    \end{enumerate}
\end{protocol}
\end{figure}

\subsection{Space Complexity}\label{sec:space}

In \cref{alg:one}, for every vertex, we keep its degree and three random numbers which together can all be stored with $O(n)$ words. The only non-trivial part of the space that we need to bound is the size of the set $S$ of the edges that we store in memory and color at the end. Our main result in this section is to show that the set $S$ has size $O(n)$ w.h.p. We start by bounding the expected size of $S$ in \cref{cl:notmanyfails} using a connection to $k$-out subgraphs (formalized in \cref{lem:kout}). We then prove a high probability bound on the size of $S$ in \cref{cl:space-highprob}.



\begin{claim}\label{cl:notmanyfails}
Take any online vertex $u$. The probability that we store the edges of $u$ in Step~\ref{alg1_findmatching}  of \cref{alg:one} is at most $O(1/\Delta^5)$. This, by linearity of expectation, implies that
$$
    \E|S| \leq O(n / \Delta^4).
$$
\end{claim}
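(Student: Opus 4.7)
The plan is to reduce the event in \Cref{cl:notmanyfails} to the non-existence of a perfect matching in a certain random bipartite graph, and then invoke the one-sided $k$-out matching lemma (\cref{lem:kout}) with $k = 3$. I would first fix an arbitrary online vertex $u$ with offline neighbors $v_1, \ldots, v_d$ (pairwise distinct, since $G$ is simple and bipartite) and edges $e_i = (u, v_i)$. For each block index $k \in \{1,2,3\}$, the random shifts $\{r_{v_i}^k\}_i$ are mutually independent across $i$ because they live on distinct offline vertices. Consequently the proposals $x_1^k, \ldots, x_d^k$ are independent and uniform within the block $B_k := \{(k-1)c\Delta+1, \ldots, kc\Delta\}$, and the three proposals $x_i^1, x_i^2, x_i^3$ of a single edge always lie in disjoint blocks and are therefore automatically distinct.

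Next I build the auxiliary bipartite graph $H$ with left side $L = \{e_1, \ldots, e_d\}$ and right side $R = B_1 \cup B_2 \cup B_3$ of size $3c\Delta$, joining each $e_i$ to $\{x_i^1, x_i^2, x_i^3\}$. By construction, Step~\ref{alg1_findmatching} succeeds exactly when $H$ admits an $L$-saturating matching, so the failure probability in the claim is $\Pr[H \text{ has no left-perfect matching}]$. The graph $H$ is essentially a one-sided 3-out random bipartite graph with $|R|/|L| \geq 3c\Delta/\Delta = 3c > e$, so \cref{lem:kout} (with $k = 3$) applies and, after plugging in our parameters, yields $\Pr[u \text{ fails}] = O(1/\Delta^5)$. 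The one subtlety I anticipate is that in $H$ each left vertex's three picks are constrained to one per block, whereas \cref{lem:kout} may be phrased for unconstrained uniform $k$-subsets of $R$; this constraint can only help (it already guarantees that the three picks of a vertex are distinct and spread across three disjoint color classes), so either the proof of \cref{lem:kout} ports over directly or a short stochastic-domination argument bridges the gap. This is really the main obstacle; once we are in the language of $k$-out matchings, everything else is bookkeeping. As a sanity check, a direct Hall-plus-union-bound computation also gives $O(1/\Delta^5)$: the smallest $|S|$ for which Hall can fail is $|S|=4$ (since $|N(S)| \geq 3$ whenever $|S| \leq 3$), and at $|S|=4$ failure requires all three single-block neighborhoods to collapse to one color, which happens with probability $\binom{d}{4}/(c\Delta)^9 = O(1/\Delta^5)$; larger $|S|$ contribute geometrically smaller terms.

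Finally, the expected-size bound on $S$ follows by linearity. Since there are at most $n$ online vertices and each failure contributes at most $\deg(u) \leq \Delta$ edges to $S$,
$$
\E|S| \;\leq\; \sum_{u \text{ online}} \Delta \cdot \Pr[u \text{ fails}] \;\leq\; n \cdot \Delta \cdot O(1/\Delta^5) \;=\; O(n/\Delta^4),
$$
as claimed.
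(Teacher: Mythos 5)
Your strategy is the same as the paper's: model Step~\ref{alg1_findmatching} as the existence of an $L$-saturating matching in an ``edge--color'' bipartite graph and invoke \cref{lem:kout} with $k=3$, then conclude $\E|S| \leq n\cdot\Delta\cdot O(1/\Delta^5)$ by linearity. The one place where you diverge is exactly the subtlety you flag: you keep the right side as all $3c\Delta$ colors, so each left vertex's three neighbors are \emph{not} a uniform $3$-subset of the right side (they are forced one per block, with correlated residues), and \cref{lem:kout} does not apply verbatim. The paper sidesteps this cleanly by collapsing the blocks: its color graph $H_u$ has right side $[c\Delta]$ only, with each edge $e_i$ joined to the three residues $y_i^j = (r^j_{v_i}+\deg_{v_i}) \bmod c\Delta$, which \emph{are} a uniform $3$-subset of $[c\Delta]$ since the $r^j_{v_i}$ are three distinct uniform draws; a perfect matching there lifts to distinct actual colors because distinct residues force distinct colors across the disjoint blocks, and $c=2.72>e$ with $|X|\leq\Delta$ gives the $O(1/\Delta^5)$ bound directly. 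By contrast, your proposed ``the constraint can only help'' domination is not obviously true (there is no evident coupling between the block-constrained and uniform-$3$-subset models), so as written that bridge is a gap; your fallback of re-running the Hall computation in the block model does work (the per-vertex probability of landing in a fixed set $T$ is at most $\bigl(|T|/3\bigr)^3$ over roughly $(c\Delta)^3$, matching the lemma's bound, and your $|S|=4$ term is indeed the dominant one), but it amounts to reproving \cref{lem:kout} rather than using it. If you adopt the paper's mod-$c\Delta$ projection, your argument becomes identical to the paper's proof with no extra work.
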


\begin{proof}
Our main idea to prove this claim is to use the maximum matching problem on an appropriately defined ``color graph'' defined below. When an online vertex $u$ arrives in the input we construct the bipartite graph $H_u$ in the following way.

\begin{definition}[the color graph]\label{def:colorgraph}

For any online vertex $u$, we define the color graph $H_u = (X, Y, E)$ as follows. The set $X$ corresponds to the edges of $u$, i.e., for each edge $(u, v_i)$, we have a vertex corresponding to $v_i$ in $X$. For each color in the range $[c \Delta]$
 we have a corresponding vertex in set $Y$. Each vertex in $X$ has exactly three edges to $Y$ which are to the three colors:
\begin{flalign*}
                y^1_i &:= \big((r^1_{v_i} + \deg_{v_i}) \bmod c \Delta\big),\\
                y^2_i &:= \big((r^2_{v_i} + \deg_{v_i}) \bmod c \Delta\big),\\
                y^3_i &:= \big((r^3_{v_i} + \deg_{v_i}) \bmod c \Delta \big),
            \end{flalign*}
where $v_i$ is any offline vertex adjacent to $u$. Note that since $r^1_{v_i}, r^2_{v_i}, r^3_{v_i}$ are distinct in \cref{alg:one}, the three colors $y^1_i, y^2_i, y^3_i$ will be distinct as well.
\end{definition}

We first describe that a perfect matching in $H_u$ corresponds to a valid edge-coloring of all edges of $u$. Let $M$ be a perfect matching in $H_u$. Suppose that the node corresponding to edge $e_i$ is matched in $M$ to color $y_i^j$ (which is distinct for all $e_i$'s). In this case, we assign color $x_i^j$ to $e_i$ where $x^i_j$ is as defined in \cref{alg:one}. To see why all edges of $u$ receive different colors note that 
$$y_i^j = x_i^j \bmod c\Delta.$$ 
In addition, all matched $y_i^j$ are unique because $M$ is a matching. Therefore, if we find a perfect matching in the color graph $H_u$ we can easily color edges incident to $u$ using this matching.  

Thus, it remains to show that $H_u$ does indeed have a perfect matching (with large enough probability). We do so using the following \cref{lem:kout} on existence of perfect matchings in a so-called $k$-out model. \cref{lem:kout} proves that if each vertex of one side of a bipartite graph is made adjacent to 3 random vertices on the other side, then the graph has a perfect matching provided that the other side is sufficiently large.  We note that by a classic result of \citet*{Frieze86} from 1986, if vertices in {\em both} sides of the bipartite graph pick $k$ random edges, then the graph has a perfect matching w.h.p. if and only if $k \geq 2$. However, this does not hold when only one side of the graph pick random edges, which is the focus of the following \cref{lem:kout}.


\begin{lemma}\label{lem:kout}
    Consider a random bipartite graph $G$ with vertex sets $V$ and $U$ where each vertex $v \in V$ is adjacent to exactly 3 distinct vertices in $U$ picked uniformly (from all subsets of size 3 of $U$) and independently from the choice of the rest of vertices in $V$. If $|V| \leq n$ and $|U| = cn$ for fixed $c > e$, the graph $G$ has a perfect matching with probability at least $1-O(1/n^5)$.

\end{lemma}

\begin{proof}
    First, we argue that we can w.l.o.g. assume that $|V| = n$. Since if we have $|V| < n$ we can add some dummy vertices to side $V$ and draw random neighbors for them to $U$. Any perfect matching in this new graph gives us a perfect matching in the original graph by removing its dummy edges. Therefore the probability of finding a perfect matching in this graph is no smaller than the case with $|V| = n$. 
    
    We let $c = (1+\epsilon)e$ for some fixed $\epsilon > 0$. We assume that $n$ is larger than any needed constant (possibly a function of $\epsilon$). Note that if $n$ is fixed, then by adjusting the constants we can ensure $1-O(1/n^5) = 0$, making the probablistic statement of the lemma trivial.

    By Hall's theorem, $G$ has a perfect matching iff for every subset $S \subseteq V$ it holds that $|N_G(S)| \geq |S|$. This condition deterministically holds if $|S| \leq 3$ as every vertex in $V$ has degree exactly 3. Therefore it suffices to show that it holds with high probability for all sets of size  at least 4. Let us fix $4 \leq k \leq n$. The probability that all the edges of a vertex $v \in V$ go to a fixed subset $U$ of size at most $k-1$ in $U$ is exactly $\binom{k-1}{3}/ \binom{|U|}{3}$. Therefore, we have
    \begin{flalign*}
        \Pr[\text{exists } S \subseteq V : |S| = k \text{ and } |N(S)| < k] &\leq \binom{n}{k} \cdot  \binom{|U|}{k-1} \cdot \left(\frac{\binom{k-1}{3}}{\binom{|U|}{3}}\right)^k\\
        &\leq \left( \frac{n e}{k}\right)^k \cdot \left( \frac{cne}{k-1}\right)^{k-1} \cdot \left(\frac{(k-1)(k-2)(k-3)}{cn (cn-1)(cn-2)}\right)^{k} \tag{since $\binom{n}{k} \leq \left(\frac{ne}{k}\right)^k$ for all $n$ and $k$}\\
        &\leq \left( \frac{n e}{k}\right)^k \cdot \left( \frac{cne}{k-1}\right)^{k-1} \cdot \left(\frac{k-1}{cn/(1+0.1\epsilon)}\right)^{3k}\tag{Here we use $cn/(1+0.1\epsilon) \leq cn-2$ which holds for $n$ larger than some constant.}\\
        &= \frac{(1+0.1\epsilon)^{3k} e^{2k-1} (k-1)^{2k+1}}{c^{2k+1} \cdot k^k \cdot n^{k+1}}\\
        &\leq \left(\frac{(1+0.5\epsilon)e}{c}\right)^{2k-1} \cdot \left( \frac{k}{n}\right)^{k+1}\tag{Here we use $(1 + 0.1 \epsilon)^2 \leq 1 + 0.5 \epsilon$ for small enough $\epsilon$}
        \\
        &\leq \left(\frac{1+0.5\epsilon}{1+\epsilon}\right)^{2k-1} \left( \frac{k}{n}\right)^{k+1} \tag{Since $c =(1+\epsilon)e.$}.
    \end{flalign*}

    To finish the proof, note that as discussed, for $k \leq 3$ the Hall's guarantee holds deterministically. For $k = 4$, the inequality above is upper bounded by $(k/n)^{k+1} = O(1/n^5)$ and thus all subsets of size $k=4$ in $V$ have at least 4 neighbors in $U$ with probability at least $1-O(1/n^5)$. For each choice of $5 \leq k \leq n^{0.1}$, the inequality above is upper bounded by $(k/n)^{k+1} \leq n^{-0.9 \times 6} = n^{-5.4}$; a union bound over all $O(n^{0.1})$ such choices of $k$ gives that with probability at least $1-O(n^{-5.3})$, any set of size $5 \leq k \leq n^{0.1}$ in $V$ has at least $k$ neighbors in $U$ as well. Finally, for the case where $n^{0.1} < k \leq n$, the inequality above is upper bounded by
    $$
        \left(\frac{1+0.5\epsilon}{1+\epsilon}\right)^{2k-1} \left( \frac{k}{n}\right)^{k+1} \leq \left(\frac{1+0.5\epsilon}{1+\epsilon}\right)^{2k-1} = 2^{-O_\epsilon(k)} = 2^{-O_\epsilon(n^{0.1})} \ll O(1/n^6).
    $$
    Thus, again by a union bound over all such choices of $k$, we get that with probability at least $1-O(1/n^{5})$, any set of size $n^{0.1} < k \leq n$ in $V$ has at least $k$ neighbors in $U$ as well. Putting together all of these cases of $k$, we get that with probability at least $1-O(1/n^5)$, Hall's condition holds and $G$ has a perfect matching.
\end{proof}

Now observe that the color graph $H_u$ meets the conditions of \cref{lem:kout}. There are at most $\Delta$ edges incident to $u$ so $|X| \leq \Delta$. On the other hand $|Y| \geq c \Delta$. Moreover, for each edge $e_i = (u, v_i)$ its corresponding vertex in $X$ is made adjacent to the vertices corresponding to colors $y_i^1, y_i^2, y_i^3$ as specified in \cref{def:colorgraph}. Since the $r_{v_i}^j$'s are distinct and uniform, each vertex in $X$ is made adjacent to 3 distinct uniform vertices in $Y$ in graph $H_u$ and so we can apply \cref{lem:kout}. This implies that $H_u$ has a perfect matching with probability at least $1 - O(1/\Delta^5)$. This means that we store the edges of each online vertex in $S$ with probability at most $O(1/\Delta^5)$, concluding the proof of \cref{cl:notmanyfails}.
\end{proof}

It is worth noting that to find the maximum matching in $H_u$ we use the algorithm of \cite{4569670} that has space complexity of $O(|V|+|E|)$ which is $O(\Delta)$ here. We can delete this part of the memory before seeing the next online vertex.

While \cref{cl:notmanyfails} bounds the expected space of \cref{alg:one}, our next \cref{cl:space-highprob} bounds the space by $O(n)$ w.h.p.

\begin{claim}\label{cl:space-highprob}
\cref{alg:one} uses $O(n)$ space with probability $1-1/\poly(n)$.
\end{claim}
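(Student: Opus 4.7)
Let $Y := \sum_u X_u$ be the number of online vertices whose edges were dumped into $S$ in \cref{alg1_findmatching}. Since each such vertex contributes at most $\Delta$ edges, we have $|S| \le \Delta \cdot Y$, and \cref{cl:notmanyfails} gives $\E Y \le O(n/\Delta^5)$. The counters and the three shifted-permutation seeds for each offline vertex use $O(n)$ words deterministically, so the task reduces to proving $Y = O(n/\Delta)$ with probability $1-1/\poly(n)$.

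The technical obstacle is that the $X_u$'s are not independent: two online vertices $u, u'$ have correlated indicators iff they share an offline neighbor, since $X_u$ is a deterministic function of the shifts $\{r_v^1, r_v^2, r_v^3\}_{v \in N(u)}$. The dependency graph $D$ on the online vertices has maximum degree at most $\Delta^2$, because each online $u$ has at most $\Delta$ offline neighbors and each offline vertex is incident to at most $\Delta$ online vertices. My plan is to overcome the dependence by partitioning the online vertices into independent sets: greedily color $D$ with $\chi \le \Delta^2 + 1$ colors to obtain classes $I_1, \ldots, I_\chi$ such that no two vertices in the same class share an offline neighbor. Within a class, the indicators $\{X_u\}_{u \in I_j}$ then depend on disjoint batches of the $r_v^i$'s and are therefore mutually independent.

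Within each class I would apply a standard Chernoff bound (using the variant in \cref{prep:chernoff} combined with the additive tail $\binom{|I_j|}{t} p^t$ for small expectations) to conclude that, for any desired constant $C$,
\[
    \Pr\!\Bigl[\textstyle\sum_{u \in I_j} X_u \ge C' \max\!\bigl(|I_j|/\Delta^5,\; \log n\bigr)\Bigr] \le 1/n^{C}.
\]
A union bound over the $\chi \le \Delta^2 + 1$ classes then gives $Y \le O(n/\Delta^5 + \Delta^2 \log n)$ with probability $1-1/\poly(n)$, and hence $|S| \le \Delta\cdot Y = O(n/\Delta^4 + \Delta^3 \log n)$. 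This is $O(n)$ whenever $\Delta \le (n/\log n)^{1/3}$.

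The complementary regime $\Delta \ge (n/\log n)^{1/3}$ is handled separately by the much stronger marginal bound from \cref{cl:notmanyfails}: a plain union bound over the at most $n$ online vertices gives $\Pr[Y \ge 1] \le n \cdot O(1/\Delta^5) = O((\log n)^{5/3}/n^{2/3}) = 1/\poly(n)$, so $S$ is actually empty w.h.p. in this regime. The main thing to be careful about is tuning the constant $C$ and the polynomial exponent so that both the concentration bound in the small-$\Delta$ regime and the marginal bound in the large-$\Delta$ regime simultaneously attain the same high-probability guarantee. Combining this with the deterministic $O(n)$ space for seeds and counters yields the claimed $O(n)$ space with probability $1-1/\poly(n)$.
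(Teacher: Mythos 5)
Your proposal is correct and follows essentially the same route as the paper: both bound $|S|$ by $\Delta\sum_u X_u$, handle the dependence by greedily coloring the dependency graph of maximum degree $\Delta^2$ into at most $\Delta^2+1$ independent classes, apply Chernoff within each class with a union bound, and dispose of the large-$\Delta$ regime by a direct union bound over the $O(1/\Delta^5)$ marginal from \cref{cl:notmanyfails}. The only differences are cosmetic parameter choices (your $\max(|I_j|/\Delta^5,\log n)$ threshold and crossover at $\Delta \approx (n/\log n)^{1/3}$ versus the paper's small/large class split with $\delta=\Delta^4$ and crossover at $(n/\log n)^{1/4}$), which do not change the argument.
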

\begin{proof}
For an online vertex $u$, define $X_u$ to be the indicator random variable where $X_u = 1$ iff we store edges of $u$ in $S$. Additionally, define $X = \sum_{u \in U} X_u$.
We know that
$$
\Pr[|S| > O(n)] \leq \Pr[\Delta \cdot X > O(n)] = \Pr[X > O(n/\Delta)].
$$

We know that $\E[X] \leq n \cdot O(1/\Delta^5)$ by \cref{cl:notmanyfails}. However, since the $X_u$'s  are {\em not} independent we can not use Chernoff bound to provide  a concentration bound on their sum. To see why, take two online vertices $u_1, u_2$ that share the same set of offline neighbors and come one after another in the input. If $X_{u_1} = 1$ then also $X_{u_2} = 1$ since the colors used for $H_{u_2}$ are exactly the same as the colors used for $H_{u_1}$ except they are shifted by one.

To bound $X$ our plan is to divide all $X_u$'s into groups such that in each group all the random variables are independent and then apply the Chernoff bound on each group separately. We know that if two online vertices $u$ and $w$ do not share any offline vertex as their neighbor then $X_u$ and $X_w$ are independent. This is because the colors specified in Step~\ref{alg1_colordefine} of \cref{alg:one} are all a function of the randomness on the set of neighbors an online vertex has. So if these sets are disjoint then $X_u$ and $X_w$ are independent. 

Let us define the dependency graph between these variables as follows: The vertex set  of this dependency garph is $\{X_u : u \in U\}$ and there is an edge between $X_u, X_w$ with $u, w \in U$ iff $u$ and $w$ share an offline neighbor. The maximum degree in this graph is at most $\Delta^2$ since each vertex $u \in U$ has at most $\Delta$ offline neighbors that each has at most $\Delta$ online neighbors.

Since any graph of maximum degree $\Delta'$ can be {\em vertex} colored using at most $\Delta'+1$ colors via a simple sequential greedy algorithm, the dependency graph can be vertex colored via at most $\Delta^2+1$ colors. Doing so, note that vertices in each color class become independent random variables by definition of the dependency graph. Therefore,  we can apply the Chernoff bound on the sum in each color class. With a slight abuse of notation, let $X_1, ..., X_t$ be a group of variables in a color class. Define $\mu = \E[X_1+X_2+...+X_t]$. Since $\E[X_i] = O(1/\Delta^5)$  (by \cref{cl:notmanyfails}) by linearity of expectation we get  that for some constant $C$, 
\begin{equation}\label{eq:linearity-htu8}
\mu = O(t /\Delta^5).  
\end{equation}
Applying the Chernoff bound we get that for any $\delta > 0$ and $\mu' \geq \mu$,
\begin{equation}\label{eq:cher}
\Pr[X_1+X_2+...+X_t \geq (1 + \delta)\mu'] \leq \exp\left( \frac{-\delta^2}{2 + \delta} \mu' \right).
\end{equation}
Let us call this color class {\em small} if $t \leq n/\Delta^3$ and {\em large} otherwise. For small groups, we do not need to prove any concentration bound, since a total of $(\Delta^2 + 1) (n/\Delta^3) = O(n/\Delta)$ vertices belong to small groups and each stores $\Delta$ edges in $S$, giving a deterministic upper bound of $O(n)$ on the number of such edges in $S$. It thus remains to analyze large groups only.

To deal with large groups we consider two cases for $\Delta$.

\paragraph{Case 1: $\Delta = O((n/\log n)^{1/4})$.} Letting $\delta = \Delta^4$ in \cref{eq:cher}, we get:
\begin{flalign*}
\Pr\left[\sum_{i=1}^t X_i \geq (1+\Delta^4)\mu\right] &\leq 
\exp\left(\frac{-\Delta^8}{2 + \Delta^4} \mu \right).
\end{flalign*}

Let $\mu' = C \cdot t/\Delta^5$. Note that $\mu \leq C \cdot t/\Delta^5 = \mu'$

and $t \geq \frac{n}{\Delta^3}$ since we are only considering large groups we get,

$$
\exp\left(\frac{-\Delta^8}{2 + \Delta^4} \mu' \right) \leq \exp\left(
\frac{-\Delta^8}{2 + \Delta^4} \cdot \frac{Cn}{\Delta^8}
\right) = \exp \left( \frac{-Cn}{2 + \Delta^4} \right).
$$

Since we have 
$
\Delta \leq O\left(\frac{n}{\log n}\right)^{1/4}
$ the last term can be upper bounded by $n^{-10}$.

Take $t_1, t_2, ..., t_k$ to be the size of the groups. We know that $k \leq n$; thus by a union bound over $k$, the probability that there is one group that deviates from its mean with a multiplicative factor of more than $\Delta^4$ is at most $n^{-9}$. Therefore, with probability at least $1 - n^{-9}$, all groups deviate from their mean with a multiplicative factor of less than $\Delta^4$. In this case, noting that $\sum_{i} t_i = n$, we get
$$
X \leq \Delta^4 \cdot \sum_{i} \frac{C\cdot t_i}{\Delta^5} = O(n/\Delta).
$$
Recalling that $X$ is the total number of online vertices that store their edges in $S$, we get that the total number of edges in $S$ is upper bounded by $\Delta \cdot O(n/\Delta) = O(n)$ w.h.p.

\paragraph{Case 2: $\Delta = \omega(\left(n/\log n\right)^{1/4})$.} Notice that the probability of storing any $u$ edges in $S$ is $n^{-5/4+o(1)}$. By union bounding over all $n$ online vertices, none will be stored in $S$ with a probability of $O(n^{-0.25+o(1)})$. 
\end{proof}
\subsection{Correctness and the Number of Colors}\label{sec:correctness}

In this section, we discuss why any two adjacent edges receive different colors in our algorithm. We can ignore the edges colored in Step~\ref{alg1_end} of \cref{alg:one} as they use a totally new set of colors from the rest of the edges. So we focus on the rest of the edges in the remainder of this section.

For an online vertex $u$, its edges by Step~\ref{alg1_findmatching} all receive different colors (and if this is not possible, we store $u$'s edges in $S$ which we discussed can be ignored earlier). So it remains to prove that there are no conflicts for the offline vertices.

For an offline vertex $v$, firstly, note that for any edge $e_i$:
\begin{flalign*}
    x_i^1 \in [0, c\Delta), \qquad
    x_i^2 \in [c\Delta, 2c\Delta), \qquad
    x_i^3 \in [2c\Delta, 3c\Delta).
\end{flalign*}

Since these ranges are disjoint, for two edges to receive the same color they must be in the same range.

Note that when the algorithm is running $\deg_v$ is increasing by one after we see an edge of $v$. Let $x_i^1 = x_j^1$ for two edges of $v$ where $e_i$ arrives before $e_j$. Define $\alpha$ to be the value of $\deg_v$ when the edge $e_i$ arrives and $\beta$ to be the value of $\deg_v$ when the edge $e_j$ arrives. Then we get that, 
$$
\big((r^1_{v} + \alpha) \bmod c \Delta\big) = \big((r^1_{v} + \beta) \bmod c \Delta\big),
$$
which is equivalent to $\alpha - \beta \bmod c\Delta = 0$. Since
$|\alpha - \beta| \leq \Delta$ and $c > 1$, this is a  contradiction and thus we cannot have $x_i^1 \neq x_j^1$. Same can be applied to the second and the third ranges of colors. Therefore, offline vertices also recieve distinct colors on all their edges and the coloring our algorithm finds is a valid edge coloring. 

Finally, it can be verified from \cref{alg:one} that the number of colors used is 
$3c\Delta + \Delta = O(\Delta).$
This section, put together with the previous section concludes the proof of \cref{lem:onesided} and, as discussed, this finishes the proof of \cref{thm:vertex-arrival}.

\subsection{Generalization to Batch Arrivals}\label{sec:batch}

Our \cref{alg:one} considers the vertex arrival model, i.e., all edges of an online vertex arrive at the same time. Here, we consider a more general {\em batch arrival} model that interpolates between edge arrivals and vertex arrivals. We show that our \cref{alg:one} can be generalized to this model. This generalization will be particularly useful in our algorithm of \cref{sec:edge-arrival}.

\begin{definition}[one-sided batch arrival model]\label{def:batch}
    In this model, we have a bipartite graph $G = (U, V, E)$ with  {\em online} vertices $U$ and {\em offline} vertices $V$. For a parameter $k$ of the problem, edges arrive in batches of size $k$ where all edges in the same batch are incident to the same online vertex $u$. The edges of a single online vertex $v$ can arrive in up to $\Delta / k$ different batches. Importantly, there is no guarantee that different batches of the same online vertex arrive consecutively (otherwise the model would be equivalent to the vertex arrival model). We use $B_i(u)$ to denote the $i$'th batch of online vertex $u$.
\end{definition}

We prove that \cref{alg:one} leads to the following bound in the one-sided batch arrival model. Note that setting $k = \Delta$ recovers \cref{thm:vertex-arrival}, so this is only more general.

\begin{lemma}\label{lem:batch-naive}
    There is a streaming algorithm that in the one-sided batch arrival model with batches of size $k$, uses $O(n)$ space and w.h.p. reports a proper $O(\Delta^2/k)$ edge coloring.
\end{lemma}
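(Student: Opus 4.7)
The plan is to run \cref{alg:one} almost unchanged on each arriving batch, but shift the color palette by a per-vertex batch index so that different batches of the same online vertex occupy disjoint color blocks. Concretely, we maintain the state of \cref{alg:one} --- namely $(r_v^1, r_v^2, r_v^3, \deg_v)$ for each offline vertex $v$ --- and add a batch counter $b_u$ for each online vertex $u$. When a batch $B$ of $u$ arrives, we increment $b_u$ (to the new value $b$) and for every edge $e=(u,v)\in B$ define
$$
x_e^j = \bigl((r_v^j+\deg_v)\bmod c\Delta\bigr) + (j-1)c\Delta + (b-1)\cdot 3c\Delta, \qquad j\in\{1,2,3\}.
$$
We then run the same matching-based assignment as in \cref{alg1_findmatching} of \cref{alg:one} on these candidates: on success we stream the colors, on failure we dump the batch into $S$. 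In both cases we increment $\deg_v$ for every endpoint $v$. At the end we offline-color $S$ with $\Delta$ fresh colors. Since $b\le\Delta/k$ throughout, the total palette size is $3c\Delta\cdot(\Delta/k)+\Delta = O(\Delta^2/k)$.

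Correctness mirrors \cref{sec:correctness}. On the online side, collisions between distinct batches of the same $u$ cannot occur because the blocks $[(b-1)\cdot 3c\Delta,\;b\cdot 3c\Delta)$ are disjoint for different $b$, and within a single batch the matching step enforces distinct colors by construction. On the offline side, two edges of a vertex $v$ can only collide if they fall into the same $(b,j)$-sub-block, where the argument of \cref{sec:correctness} applies verbatim because $\deg_v$ still increments by one per arriving edge of $v$ and any two $\deg_v$ values differ by less than $c\Delta$.

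Matching existence per batch follows from \cref{lem:kout} with $n:=\Delta$: the color graph of a batch of size $k'\le k\le\Delta$ has $|X|=k'\le\Delta=n$ (padding $X$ with dummy vertices if $k'<\Delta$, and noting that the real-edge portion of any perfect matching is a valid assignment) and $|Y|=c\Delta=cn$ with $c>e$; the three candidates are a uniform distinct triple in $[c\Delta]$ because $r_v^1,r_v^2,r_v^3$ are, and they are independent across the distinct offline vertices in the batch. Hence a perfect matching exists with probability $1-O(1/\Delta^5)$. With at most $n\Delta/k$ batches in total and each failed batch contributing $k$ edges, linearity of expectation gives $\E|S|\le k\cdot(n\Delta/k)\cdot O(1/\Delta^5) = O(n/\Delta^4)$.

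The main remaining obstacle is upgrading this to $|S|=O(n)$ with high probability, which we do by mimicking \cref{cl:space-highprob}. Build the dependency graph on batches in which two batches are adjacent whenever they share an offline neighbor. Each batch touches $k$ offline vertices and each offline vertex lies in at most $\Delta$ batches, so this dependency graph has maximum degree at most $k\Delta\le\Delta^2$ and thus admits a proper vertex coloring with at most $\Delta^2+1$ classes of mutually independent failure indicators. We split the classes into ``small'' (at most $n/(k\Delta^2)$ batches, contributing at most $O(n)$ edges to $S$ deterministically across all $\Delta^2+1$ classes since each failed batch has $k$ edges) and ``large'', applying the Chernoff bound to each large class exactly as in \cref{cl:space-highprob} and case-splitting on $\Delta$. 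The technical work is to verify that both $\Delta$-regimes still yield $|S|=O(n)$ with probability $1-1/\poly(n)$; this goes through because each failed batch only contributes $k$ edges and the per-class failure-count mean, $O(|c|/\Delta^5)$, is small enough that even amplified by a factor of $\Delta^4$ it totals $O(n/k)$ failed batches, i.e., $O(n)$ edges in $S$. All other state fits in $O(n)$ words, giving total space $O(n)$.
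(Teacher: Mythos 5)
Your construction is essentially the paper's: index the palette by the batch counter (the paper phrases this as reporting the pair $(c,i)$ where $i$ is the batch number), run the matching step of \cref{alg:one} on each batch's proposals, dump failed batches into $S$, and note that the offline-side argument of \cref{sec:correctness} is untouched since $\deg_v$ still increments once per edge. All of that, including the application of \cref{lem:kout} with $n:=\Delta$ per batch and the color count $3c\Delta\cdot(\Delta/k)+\Delta=O(\Delta^2/k)$, is correct and matches the intended proof.

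The one genuine soft spot is your high-probability bound on $|S|$ in the large-$\Delta$ regime. You propose to redo \cref{cl:space-highprob} at \emph{batch} granularity and assert that the same case split on $\Delta$ ``goes through.'' Your Case-1 accounting does work, but the Case-2 analogue does not as stated: there are up to $m/k\le n\Delta/(2k)$ batches, each failing with probability $O(1/\Delta^5)$, so the union bound gives only $O\bigl(n/(k\Delta^4)\bigr)$, which at the threshold $\Delta=\Theta((n/\log n)^{1/4})$ and small $k$ (say $k=O(1)$) is $\Theta(\log n)$ --- vacuous, not $1/\poly(n)$. The clean fix, and what the paper implicitly relies on so that no new concentration argument is needed, is to group failures by \emph{online vertex} rather than by batch: a batch of $u$ fails only if the full color graph $H_u$ of \cref{def:colorgraph} (over all of $u$'s edges, across all batches) has no perfect matching, since a perfect matching saturating all of $X$ restricts to one saturating any batch. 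Hence the per-vertex failure indicators of \cref{cl:notmanyfails} still dominate the event ``some batch of $u$ is stored,'' each vertex contributes at most $\Delta$ edges to $S$ on failure, and \cref{cl:space-highprob} applies verbatim. (Alternatively, you could keep the per-batch bookkeeping but move the case boundary to roughly $k\Delta^3\approx n/\log n$, or bound the number of failed batches by Markov; but the per-vertex reduction is simpler and is the intended argument.)
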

\begin{proof}
    Note that the only part of \cref{alg:one} where we use the assumption that all edges of the online vertices arrive at the same is in Step~\ref{alg1_findmatching}. There, the offline vertex of each edge $e_i$ of $u$ proposes three colors $\{x^1_i, x^2_i, x^3_i\}$ for $e_i$. Committing to choose a color from $\{x^1_i, x^2_i, x^3_i\}$ for each edge $e_i$ already ensures there will be no conflicts among the edges of the offline vertices. On the other hand, our \cref{lem:kout} guarantees that for each online vertex $u$, with probability $1-O(1/\Delta^5)$, there is a proper coloring of all of its edges using their proposed colors. However, to find this proper coloring, it is important to have all edges of the online vertex at once. If, as in our case in this lemma, the edges of the online vertex arrive in batches, then we cannot ensure that edges in two different batches of the same online vertex $u$ receive different colors. To fix this, first we only properly color the edges of each batch of the online vertex using their proposed colors. If this assigns color $c$ to edge $e$, the final color that we report for $e$ is $(c, i)$ where $i$ denotes which batch of $u$ this edge $e$ belongs to. Since there are at most $\Delta/k$ batches for each vertex, the total number of colors needed with this approach is $O(\Delta \cdot \Delta/k) = O(\Delta^2/k)$. Finally, note that the space remains $O(n)$ since the only additional information that we need is the number of batches that have arrived for each vertex which can be stored with $O(n)$ counters, one for each online vertex.
\end{proof}

\section{Edge-Arrivals: $O(\Delta)$ Edge-Coloring  with $\widetilde{O}(n\sqrt{\Delta})$ Space}\label{sec:edge-arrival-special}

In this section, we prove \cref{thm:main} for the special case where $s = \sqrt{\Delta}$. That is, we obtain an asymptotically optimal $O(\Delta)$ edge-coloring using $\widetilde{O}(n \sqrt{\Delta})$ space. We later prove \cref{thm:main} for the full range of $s$ in \cref{sec:edge-arrival}.

To solve the edge arrival model, we present a reduction to the one-sided batch arrival model of \cref{def:batch}. Throughout this section, we only consider the batch arrival model for batches of size $k := \lceil \sqrt{\Delta} \rceil$.

\begin{claim}\label{cl:edge-to-batch-4}
    Suppose that there is a streaming algorithm $\mc{A}$ that edge colors a bipartite graph of maximum degree $\Delta$ using $O(\Delta)$ colors under one-sided batch arrivals with batches of size $k = \Theta(\sqrt{\Delta})$ using space $\widetilde{O}(n \sqrt{\Delta})$. Then there is a streaming edge coloring $\mc{A}'$ algorithm that $O(\Delta)$ edge colors any general graph of maximum degree $\Delta$ under edge arrivals using $\widetilde{O}(n\sqrt{\Delta})$ space.
\end{claim}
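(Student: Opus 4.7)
The plan is two-layered. At the top layer I apply \cref{lem:bipartization} so that it suffices to build an edge-arrival algorithm for \emph{bipartite} graphs with $O(\Delta)$ colors and $\widetilde{O}(n\sqrt{\Delta})$ space: the bipartization adds only $\widetilde{O}(n)$ space overhead, and the resulting color count is $f(\Delta) + f(\Delta/2) + \cdots = O(\Delta)$ because the inner $f(\Delta)$ will be $O(\Delta)$. So I will focus on turning the one-sided batch-arrival algorithm $\mc{A}$ into a bipartite edge-arrival algorithm with the stated parameters.

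For a bipartite graph $G = (U, V, E)$ under edge arrivals, I run two copies $\mc{A}_U$ and $\mc{A}_V$ of $\mc{A}$ in parallel on disjoint color palettes, with $\mc{A}_U$ treating $U$ as its online side and $\mc{A}_V$ treating $V$ as its online side (the same trick as in \cref{cl:twosided-to-onesided}). Alongside I maintain a buffer $H \subseteq E$ of uncolored edges that preserves the invariant $\deg_H(w) < k$ for every vertex $w$, where $k := \lceil \sqrt{\Delta}\rceil$. When a new edge $(u, v)$ arrives with $u \in U$, $v \in V$, I add it to $H$; if this raises $\deg_H(u)$ to $k$, I remove all $k$ of $u$'s edges currently in $H$ and feed them as a single batch to $\mc{A}_U$, and symmetrically if $\deg_H(v)$ hits $k$ I feed $v$'s $k$ edges as a batch to $\mc{A}_V$. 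Because flushing $u$'s batch also removes $(u, v)$ from $H$, the simultaneous-trigger case (both endpoints reaching $k$ after the insertion) is resolved by simply flushing one side first, after which the other vertex's buffer degree is back to $k-1$. When the stream ends, the invariant guarantees the residual graph $H$ has maximum degree $<k$, so I greedily color it with $O(\sqrt{\Delta})$ fresh colors, completing the coloring.

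For the analysis the constraints of \cref{def:batch} are met: each fed batch has exactly $k$ edges sharing one online vertex, and since any vertex has total degree at most $\Delta$ it contributes at most $\lfloor \Delta/k \rfloor = O(\sqrt{\Delta})$ batches; moreover the subgraph fed to either copy of $\mc{A}$ is a subgraph of $G$, hence has max degree $\le \Delta$. The invariant $\deg_H(w) < k$ deterministically forces $|H| < nk/2 = O(n\sqrt{\Delta})$, so the buffer together with the per-vertex counters $\deg_H(w)$ uses only $\widetilde{O}(n\sqrt{\Delta})$ words; adding the $\widetilde{O}(n\sqrt{\Delta})$ state of each of $\mc{A}_U, \mc{A}_V$ and the $\widetilde{O}(n)$ overhead of \cref{lem:bipartization} keeps the total at $\widetilde{O}(n\sqrt{\Delta})$. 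The color count is $O(\Delta)$ from each of $\mc{A}_U$ and $\mc{A}_V$ plus $O(\sqrt{\Delta})$ from the final greedy pass, summing to $O(\Delta)$. The thing to get right is bookkeeping --- carefully tracking per-vertex buffer degrees, handing partial (size $<k$) tail batches of each vertex to the greedy step, and routing batches to the correct copy of $\mc{A}$ --- rather than any deeper analytic obstacle, since $\mc{A}$'s guarantees are invoked entirely as a black box.
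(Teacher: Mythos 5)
Your proposal is correct and follows essentially the same route as the paper: buffer edges, flush a size-$k$ batch whenever some vertex accumulates $k$ buffered edges, handle the two-sidedness by running one instance of $\mc{A}$ per side (the paper invokes \cref{cl:twosided-to-onesided} for exactly this), color the $O(n\sqrt{\Delta})$-edge residual buffer offline at the end, and wrap everything with \cref{lem:bipartization}. The only differences are cosmetic (you unroll the two-sided reduction explicitly and note the residual needs only $O(\sqrt{\Delta})$ rather than $\Delta$ fresh colors).
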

\begin{proof}
Let us first assume that graph $G$ is bipartite with vertex sets $U$ and $V$ but its edges arrive in an arbitrary order.  We start with a buffer $T = \emptyset$ and add any edge that arrives in the stream to $T$. Whenever there is some vertex $u$ that has at least $k$ edges in $T$, we remove those edges of $u$ from $T$ and feed them into the batch-arrival algorithm $\mc{A}$. This way, each batch has exactly $k$ edges all adjacent to the same vertex. However, this vertex could belong to either of $U$ and $V$. That is, this is not one-sided arrivals but rather two-sided arrivals. The reduction to one-sided arrivals simply follows from \cref{cl:twosided-to-onesided} at the expense of multiplying the final number of colors by two. Finally, if the stream ends and there are any remaining edges in $T$, we color them all using $\Delta$ colors via offline algorithms. Note that since each vertex in $T$ has degree at most $k$ at any point, the space of this reduction is only $O(nk) = O(n\sqrt{\Delta})$ as desired. 

From the reduction above, we get that under edge arrivals, there is an algorithm $\mc{A}''$ that edge-colors the graph using $C(\Delta)$ colors for some constant $C$ and $\widetilde{O}(n\sqrt{\Delta})$ space provided that the graph is bipartite. We now apply \cref{lem:bipartization}. This gives an algorithm for general graphs under edge-arrivals that uses $\widetilde{O}(n\sqrt{\Delta})$ space and the number of colors that it uses is at most
$$
C\Delta + C\Delta/2 + C\Delta/4 + \ldots + C \leq 2C\Delta = O(\Delta).\qedhere
$$
\end{proof}

Our goal for the rest of this section is to prove the following algorithm which plugged to \cref{cl:edge-to-batch-4} proves \cref{thm:main} for $s = \sqrt{\Delta}$:

\begin{lemma}\label{lem:batch-alg-for-4}
    There is a streaming algorithm that edge-colors any bipartite graph of maximum degree $\Delta$ under one-sided batch-arrivals (\cref{def:batch}) with batches of size $k = \Theta(\sqrt{\Delta})$ using $O(\Delta)$ colors and $\widetilde{O}(n\sqrt{\Delta})$ space.
\end{lemma}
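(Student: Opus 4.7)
The plan is to adapt \cref{alg:one} by partitioning the $O(\Delta)$-color palette into $k = \sqrt{\Delta}$ disjoint ``color groups,'' each of size $3c\sqrt{\Delta}$, and routing each online vertex's $\leq k$ batches into \emph{distinct} groups via a random permutation. This way, edges from different batches of the same online vertex are guaranteed to land in disjoint groups and cannot collide for free, replacing the batch-tag trick of \cref{lem:batch-naive} (which blew up the palette by a factor of $\Delta/k$).

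Concretely, the algorithm stores: (i) for each offline vertex $v$ and each group $g \in [k]$, three distinct random shifts $r_{v,g}^1, r_{v,g}^2, r_{v,g}^3 \in [c\sqrt{\Delta}]$ plus a per-group counter $\deg_{v,g}$, totalling $O(\sqrt{\Delta})$ words per offline vertex; and (ii) for each online vertex $u$, a random permutation $\pi_u$ of $[k]$ and a batch counter $b_u$, totalling $O(\sqrt{\Delta})$ words per online vertex. When the $b$-th batch of $u$ arrives, we set $g := \pi_u(b)$ and, mirroring \cref{alg:one}, have each edge $e_i = (u, v_i)$ in the batch propose the three colors
$$
x_i^j \;=\; \bigl((r_{v_i,g}^j + \deg_{v_i,g}) \bmod c\sqrt{\Delta}\bigr) \;+\; (j-1)\, c\sqrt{\Delta} \;+\; (g-1)\cdot 3c\sqrt{\Delta}
$$
for $j \in \{1,2,3\}$, then increment $\deg_{v_i,g}$. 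We then run the matching step of \cref{alg:one} on the resulting color graph (with $|X| = k$ batch edges, $|Y| = c\sqrt{\Delta}$ mod-values, and three random neighbors per $X$-vertex); by \cref{lem:kout} this succeeds with probability $1 - O(1/k^5) = 1 - O(1/\Delta^{5/2})$. If it fails, we stash the whole batch in a set $S$, which at the end is colored offline with $\Delta$ fresh colors.

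Correctness reduces to three checks. At online $u$: different batches receive different groups (since $\pi_u$ is a bijection) whose colors live in disjoint palette intervals, so inter-batch conflicts are impossible, and within a batch the matching ensures distinct colors. At offline $v$ restricted to a single group $g$: successive edges in group $g$ have strictly increasing $\deg_{v,g}$, so band-$j$ proposals differ by a nonzero residue modulo $c\sqrt{\Delta}$, and distinct mod-values yield distinct final colors (cross-band gaps already exceed $c\sqrt{\Delta}$). This holds as long as $v$ receives at most $c\sqrt{\Delta}$ edges in group $g$; since the $\pi_u$'s are independent across $u$ and each $v$ has at most one edge per online neighbor in a simple graph, the number of $v$'s edges in group $g$ is a sum of independent Bernoulli$(1/k)$ variables with mean $\leq \sqrt{\Delta}$, and Chernoff plus a union bound over all $(v,g)$ pairs gives $(1+o(1))\sqrt{\Delta} < c\sqrt{\Delta}$ w.h.p. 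The palette has size $3c\Delta + \Delta = O(\Delta)$, and the total space is $\widetilde{O}(n\sqrt{\Delta})$ from the shifts, counters, and permutations, plus $O(n)$ for $|S|$.

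The hard part will be the high-probability bounds in the regime where $\Delta$ is small: both the edges-per-group Chernoff and the $|S|$ concentration (which additionally requires the dependency-graph / vertex-coloring argument of \cref{cl:space-highprob} to handle batches that share an online or an offline vertex) demand $\Delta \gtrsim \log^2 n$. For smaller $\Delta$, the whole graph has only $\widetilde{O}(n)$ edges and fits within the $\widetilde{O}(n\sqrt{\Delta})$ budget, so we simply store it and invoke an offline edge-coloring algorithm as a separate regime. The remaining work is a direct adaptation of the analyses in \cref{sec:space} and \cref{sec:correctness}, now carried out inside each of the $k$ color groups rather than on the full palette.
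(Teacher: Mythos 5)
Your proposal is correct and is essentially the paper's own proof: the paper routes each online vertex's batches into $k$ groups via a random cyclic shift $b_u$ (rather than a full permutation $\pi_u$), bounds the offline degree in each group by $O(\sqrt{\Delta})$ via exactly the same Chernoff-plus-union-bound argument (with the same $\Delta = \Omega(\log^2 n)$ cutoff handled offline), and then runs $k$ black-box instances of \cref{alg:one} with disjoint $O(\sqrt{\Delta})$-size palettes, which is precisely your inlined construction with per-group shifts, counters, and the \cref{lem:kout} matching step. The only differences are cosmetic (explicit palette partition and per-group state versus $k$ parallel instances; permutation versus cyclic shift, the latter needing only $O(1)$ words per online vertex), so no further changes are needed.
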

\begin{proof}
One may wonder whether we can now directly apply the naive algorithm of \cref{lem:batch-naive} in order to prove \cref{lem:batch-alg-for-4}. This is not doable because \cref{lem:batch-naive} would require $O(\Delta^2/k) = O(\Delta^{1.5})$ colors which is much larger than our desired $O(\Delta)$ colors. So we need more ideas.

First, we assume that $\Delta \geq 300\log^2 n$. If not, the whole graph has at most $O(n \Delta) = \widetilde{O}(n)$ edges, so we can store them all and run an offline $\Delta$ edge coloring algorithm.

Consider the subgraphs $H_1$, $H_2, \ldots, H_k$ where 
$$
H_i := \bigcup_{u \in U}^{} B_{i}(u),
$$ 
recalling that $B_i(u)$ is the $i$'th batch of online vertex $u$. In words, $H_i$ is the collection of the $i$'th batches of all online vertices.  Observe that the arrival order of the edges of each subgraph $H_i$ follow the vertex arrival model. That is, for each online vertex $v$, all of its edges arrive at the same time. Note also that the maximum degree on the online side of each subgraph $H_i$ is at most $k$. Now, if we also had the same upper bound of $k$ on the degrees on the offline side of each subgraph $H_i$, then we could run $\Delta / k = O(\sqrt{\Delta})$ instances of \cref{alg:one}, which works under vertex arrivals, for each of the $k$ subgraphs $H_i$ all in parallel, using disjoint color palettes. This way, the total number of colors used would be $\frac{\Delta}{k} \cdot O(k) = O(\Delta)$ and the space would be $\frac{\Delta}{k} \cdot O(n) = O(n\sqrt{\Delta})$. Unfortunately, however, the offline vertices in each $H_i$ may have degree as large as $\Delta$. This is because all edges of an offline vertex might be in the first batch of their corresponding online vertices, making the maximum degree in $H_1$ equal to $\Delta$ rather than $k$. This is the main obstacle that we overcome in the remainder the proof.

Let us for each online vertex $u$ choose a random number $b_u \in [k]$ at the beginning of the algorithm uniformly and independently at random. Now for the $i$'th batch of $u$ define its {\em permuted batch number} $\rho_u(i)$ to be
$$
\rho_u(i) := (i + b_u) \bmod k.
$$
Doing so, we now redefine $H_{i}$ for $i \in [k]$ as
$$
H_i := \bigcup_{u \in U} B_{\rho^{-1}_u(i)}(u).
$$
In words, $H_i$ is now the graph including, for each online vertex $u$, its edges in its $j$'th batch such that $\rho_u(j) = i$. Since the total number of batches of each online vertex $u$ is at most $\Delta/k \leq k$, all of $u$'s batches receive different batch numbers.  It remains to bound the degree of offline vertices in each $H_i$. The following lemma asserts that with high probability this is in fact $O(k)$.

\begin{lemma}\label{lem:offlinesrandomization}
    All $H_i$'s have maximum degree at most $2k$ with probability $1-1/\poly(n)$.
\end{lemma}

\begin{proof}
    Due to symmetry, let us focus on bounding the degree of a given offline vertex $v$ in $H_1$. Let neighbors of $v$ be $u_1, u_2, ..., u_l$ where each $u_i$ is an online vertex. Define $X_i$ to be the indicator random variable where $X_i = 1$ iff the edge $(v, u_i)$ receives a batch number of 1. The degree of $v$ in $H_1$ can be written as
    $$
    X = \sum_{i = 1}^l X_i.
    $$
    Since $l \leq \Delta$ and also $\E[X_i] = \frac{1}{k}$, by linearity of expectation, we get
    $$
    \E[X] = \sum_{i = 1}^l \E[X_i] \leq \frac{\Delta}{k} \leq k.
    $$
   
    Observe that all $b_{u_i}$'s are chosen at the beginning of the algorithm and uniformly at random. Therefore the batch number of any given edge being 1 has a probability of $\frac{1}{k}$. Since all $b_{u_i}$'s are independent of each other, also all $X_i$'s will be independent. Thus, we can apply the Chernoff bound for $\delta = 1$ and having $\Delta \geq 300 \log^2 n$ (as assumed at the start of the proof of \cref{lem:batch-alg-for-4}) we get, 
    $$
    \Pr[\text{degree } v \text{ in } H_1 \geq 2k] = \Pr\left[\sum_{i = 1}^l X_i \geq 2k \right] \leq \exp\left(\frac{-k}{3}\right) \leq n^{-10}
    $$
    By a union bound over all $n$ choices of $v$ and over $k$ choices of $H_i$, we get
    $$
    \Pr[\text{exists } v \in V \text{ and } i \in [k] \text{ such that degree } v \text{ in } H_i \geq 2k] \leq n^{-8}.\qedhere
    $$
\end{proof}

Now that both the offline and online vertices of each $H_i$ have degree at most $2k$, we can follow the approach outline before to achieve a $O(\Delta)$ coloring with $O(n\sqrt{\Delta})$ space. The final algorithm is formalized below as \cref{alg:two}.

\begin{figure}[h]
\begin{protocol}\label{alg:two}
    A streaming edge coloring for bipartite graphs under one-sided batch arrivals with batches of size $k = \Theta(\sqrt{\Delta})$. The algorithm uses $O(\Delta)$ colors and $O(n\sqrt{\Delta})$ space. This is the algorithm used for \cref{lem:batch-alg-for-4} which plugged into \cref{cl:edge-to-batch-4} proves \cref{thm:main} for $s = \sqrt{\Delta}$.

    \smallskip\smallskip
    \textbf{Parameter:} $k := \sqrt{\Delta}$, $c := 300$ (\cref{lem:offlinesrandomization})
    
    \begin{enumerate}[leftmargin=15pt]
        \item If $\Delta \leq c \cdot \log^2 n$ we read all the edges from the stream and color them with $\Delta$ colors, so assume for the rest of the algorithm that $\Delta > c \cdot \log^2 n$.
        \item As the edges of $G$ arrive, we decompose them into $k$ subgraphs $H_1, \ldots, H_k$ and run $k$ instances of the vertex-arrival \cref{alg:one} in parallel on these subgraphs using disjoint colors in these instances, and an overall space of $O(nk)$. Each graph $H_i$ will have maximum degree $k$, so overall we use $O(k \cdot k) = O(\Delta)$ colors.
        \item For each online vertex $u$ choose a random number $b_u \in [k]$ uniformly and independently. \label{alg4_3}
        \item For each online vertex $u$ store a counter $I_u$. This will keep track of the number of {\em batches} of $u$ seen at any point. Initially we have $I_u \gets 0$.
        \label{alg4_4}

        \item Upon arrival of a batch of size $k$ for an online vertex $u$:
        \begin{enumerate}
            \item $I_u \gets I_u + 1$.
            \item Add the edges of $u$ to graph $H_x$ where
            $$x := (I_u + b_u) \bmod k.$$
            \item These will be the only edges of $u$ in graph $H_x$, thus this is indeed a bipartite graph under one-sided vertex arrivals and so we can run the instance of \cref{alg:one} on graph $H_x$ to color the edges of $u$ in it.
        \end{enumerate}
        
    \end{enumerate}
\end{protocol}
\end{figure}

\paragraph{Space-complexity:} Since we are running $k = O(\sqrt{\Delta})$ instances of \cref{alg:one} in \cref{alg:two}, the space needed for keeping all these instances is also $O(n\sqrt{\Delta})$ as each has a space of $O(n)$ words by \cref{cl:space-highprob}.
Moreover, note that the size of $T$ never exceeds $nk$ or else there must be a vertex of degree at least $k$ in it. We also use at most $O(n)$ space for the counters and random numbers (specified in Step~\ref{alg4_3} and \ref{alg4_4}). So in total, the space that \cref{alg:two} uses is $O(nk) = O(n\sqrt{\Delta})$.

\paragraph{Number of colors:} For the number of colors used by \cref{alg:two}, note that we run $k$ instances of \cref{alg:one}, and in each instance the maximum degree is  w.h.p. $2k$ by \cref{lem:offlinesrandomization}. Since \cref{alg:one} uses linear colors in the maximum degree, the total number of colors is thus $O(k \cdot k) = O(\Delta)$ w.h.p.

This completes the proof of \cref{lem:batch-alg-for-4}.
\end{proof}

As discussed, plugging \cref{lem:batch-alg-for-4} into \cref{cl:edge-to-batch-4} completes the proof of \cref{thm:main} for $s = \sqrt{\Delta}$, concluding this section.

\section{Edge-Arrivals: $O(\Delta^{1.5}/s + \Delta)$ Edge-Coloring  with $\widetilde{O}(ns)$ Space}\label{sec:edge-arrival}

In \cref{sec:edge-arrival-special}, we proved \cref{thm:main} for the special case where $s = \sqrt{\Delta}$. In this section, we prove \cref{thm:main}, more generally, for the full range of $s$. Based on the color/space trade-off of \cref{thm:main}, we can assume w.l.o.g. that $1 \leq s \leq \sqrt{\Delta}$. To see this, note that increasing $s$ increases the space usage $\widetilde{O}(ns)$ of the algorithm but in return reduces the number of colors $O(\Delta^{1.5}/s + \Delta)$ that it uses. However, when $s > \sqrt{\Delta}$, the second term in  $(\Delta^{1.5}/s + \Delta)$ becomes larger than the first, and so there is no point in increasing $s$ (and thus the space) any further. With this assumption of $1 \leq s \leq \sqrt{\Delta}$, we show that the graph can be edge colored via $O(\Delta^{1.5}/s)$ colors using $\widetilde{O}(ns)$ space.

Similar to \cref{sec:edge-arrival-special}, we start with a reduction to batch-arrivals. However, the reduction here is slightly different and more general.

\begin{claim}\label{cl:edge-to-batch}
    Suppose that there is a streaming algorithm $\mc{A}$ that edge colors a bipartite graph of maximum degree $\Delta$ using $O(\Delta^{1.5}/s)$ colors under one-sided batch arrivals with batches of size $k = \Theta(\sqrt{\Delta})$ using space $\widetilde{O}(n s)$. Then there is a streaming edge coloring $\mc{A}'$ algorithm that $O(\Delta^{1.5}/s)$ edge colors any general graph of maximum degree $\Delta$ under edge arrivals using $\widetilde{O}(ns)$ space.
\end{claim}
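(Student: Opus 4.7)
The plan is to mirror the structure of \cref{cl:edge-to-batch-4} but replace the naive $O(nk)$ buffer (which stored up to $k = \sqrt{\Delta}$ edges per vertex) with a size-$ns$ buffer that does double duty: it forms batches for $\mc{A}$ when possible, and otherwise releases ``low-degree residue'' via offline greedy coloring. Concretely, assume first that the input is bipartite with edges arriving in an arbitrary order. Maintain a buffer $H$ (initially empty), and add each arriving edge to $H$. Whenever $|H|$ reaches $ns$, trigger one of two cases: (a) if some vertex $v$ has $\deg_H(v) \geq k := \lceil\sqrt{\Delta}\rceil$, remove any $k$ of those edges from $H$ and feed them as a single batch (of the appropriate side) to $\mc{A}$; (b) otherwise, the maximum degree inside $H$ is strictly less than $k$, so edge-color all of $H$ greedily with at most $2k - 1 = O(\sqrt{\Delta})$ \emph{fresh} colors (from a new color palette, disjoint from all palettes used previously by $\mc{A}$ and by earlier greedy phases), and clear $H$. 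When the stream ends, flush the residue in $H$ with one final greedy phase.

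For the color bound, $\mc{A}$ contributes $O(\Delta^{1.5}/s)$ colors by assumption. Each greedy phase except possibly the last consumes exactly $ns$ edges of the stream, and the total number of edges is at most $n\Delta$, so there are at most $\Delta/s + 1$ greedy phases; each uses $O(\sqrt{\Delta})$ fresh colors, contributing $O(\Delta^{1.5}/s)$ in aggregate. For space, the buffer holds at most $ns$ edges at any moment, $\mc{A}$ uses $\widetilde{O}(ns)$ space by assumption, and the greedy coloring of $H$ is done in place on the buffer, so the total is $\widetilde{O}(ns)$. Validity is immediate: edges colored by $\mc{A}$ form a proper coloring of the subgraph passed to it; edges in each greedy phase are colored properly among themselves; and because the palettes of $\mc{A}$ and of distinct greedy phases are pairwise disjoint, there can be no conflict across these groups even though they share vertices of $G$.

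What remains is to strip the two reductive assumptions. Each batch produced above is incident to a single vertex, which (in the bipartite case) may lie on either side; as in \cref{cl:edge-to-batch-4}, we run two copies of $\mc{A}$ in parallel on disjoint palettes---one for batches whose vertex lies in $U$ and one for $V$---so each copy sees a valid one-sided batch-arrival instance of maximum degree at most $\Delta$, at the cost of a factor of $2$ in the color count. To lift to general graphs, we then invoke \cref{lem:bipartization} with $f(\Delta) = O(\Delta^{1.5}/s + \Delta)$; since $f$ is convex in $\Delta$, the resulting sum $f(\Delta) + f(\Delta/2) + \cdots + 2f(1)$ is a geometric series that telescopes to $O(\Delta^{1.5}/s + \Delta) = O(\Delta^{1.5}/s)$ (using $s \leq \sqrt{\Delta}$), and the space overhead remains $\widetilde{O}(ns)$. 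The only subtle step is the color accounting in case (b): the main thing to verify is that ``trigger case (b)'' implies $\deg_H(\cdot) < k$ globally in $H$, so that a single palette of $O(\sqrt{\Delta})$ colors suffices \emph{per phase}; once this is clear, the bound on the number of greedy phases is just an amortization against the total edge count.
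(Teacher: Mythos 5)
Your proposal is correct and follows essentially the same route as the paper's proof: buffer up to $ns$ edges, feed size-$k$ batches to $\mc{A}$ when some vertex accumulates $k$ buffered edges, otherwise flush the low-degree buffer with $O(\sqrt{\Delta})$ fresh colors (amortized to $O(\Delta^{1.5}/s)$ over at most $\Delta/s$ flushes), then reduce two-sided to one-sided batches and apply \cref{lem:bipartization} using $s \leq \sqrt{\Delta}$. The only differences are cosmetic (greedy $2k-1$ colors per flush versus an offline $k$-coloring, and slightly more explicit palette bookkeeping).
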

\begin{proof}
    Let us first assume that graph $G$ is bipartite with vertex sets $U$ and $V$ but its edges arrive in an arbitrary order. 
    Similar to the proof of \cref{cl:edge-to-batch-4} section, we buffer the edges in a set $T$. However, instead of allowing $T$ to have size $nk$ for $k := \lceil \sqrt{\Delta} \rceil$ which is too large when $s \ll k$, we ensure that $|T| \leq n s$ at any point. To do this, whenever $|T|$ reaches $ns$, we consider two cases: 

\paragraph{Case 1: Each vertex has less than $k$ edges in $T$.} In this case, we color using an offline algorithm all of the edges in $T$ using $k$ fresh colors and delete them from memory. We claim that in total, this only requires $O(\Delta^{1.5}/s)$ colors. To see this, note that every time that this event occurs, we remove $ns$ edges of the graph from $T$. Thus, in total, we repeat this process at most $m/ns$ times where $m$ is the number of edges in the original graph. Since $m \leq n \Delta$,  this is upper bounded by $n\Delta/ns = \Delta/s$. Each time we use $k$ fresh colors, so in total the number of colors used is $O(k \cdot \Delta/s) = O(\Delta^{1.5}/s)$.

\paragraph{Case 2: There is a vertex $u$ with $k$ edges in $T$.} In this case, we feed in the $k$ edges of $u$ as a batch to the batch-arrival algorithm $\mc{A}$. Notice that $u$ may belong to both $U$ and $V$. However, using the reduction of \cref{cl:twosided-to-onesided}, we can convert this into one-sided batch arrivals at the expense of multiplying the number of colors by two.

From the reduction above, we get that under edge arrivals, there is an algorithm $\mc{A}''$ that edge-colors the graph using $f(\Delta) = C(\Delta^{1.5}/s)$ colors for some constant $C$ and $\widetilde{O}(n s)$ space provided that the graph is bipartite. We now apply \cref{lem:bipartization}. This gives an algorithm for general graphs under edge-arrivals that uses $\widetilde{O}(ns)$ space and the number of colors that it uses is at most
$$
f(\Delta) + f(\Delta/2) + f(\Delta/4) + \ldots + f(1).
$$
Since $s \leq \sqrt{\Delta}$, $f(\Delta)$ is at least linear in $\Delta$ and so this sum is upper bounded by $O(f(\Delta)) = O(\Delta^{1.5}/s)$. 
\end{proof}

Our goal for the rest of this section is to prove the following algorithm which plugged to \cref{cl:edge-to-batch} proves \cref{thm:main}.

\begin{lemma}\label{lem:batch-alg}
    For any $1 \leq s \leq \sqrt{\Delta}$, there is a streaming algorithm that edge-colors any bipartite graph of maximum degree $\Delta$ under one-sided batch-arrivals (\cref{def:batch}) with batches of size $k = \Theta(\sqrt{\Delta})$ using $O(\Delta^{1.5}/s)$ colors and $\widetilde{O}(n s)$ space.
\end{lemma}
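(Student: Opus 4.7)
The plan is to generalize the decomposition behind \cref{alg:two}, but distribute the batches into only $s$ groups (instead of $k$) and then run the naive batch-arrival algorithm from \cref{lem:batch-naive} on each group. First, I may assume $\Delta/s \geq C\log^2 n$ for a sufficiently large constant $C$; otherwise the whole graph has at most $O(n\Delta) = \widetilde{O}(ns)$ edges and can be buffered in memory and colored by an offline $(\Delta+1)$-edge-coloring algorithm. In the main case, for each online vertex $u$ I pick an independent uniformly random shift $b_u \in [s]$ and maintain a counter $I_u$ of the number of batches of $u$ seen so far. When a new batch of $u$ arrives, I append its $k$ edges to subgraph $H_x$ with $x := (I_u + b_u) \bmod s$.

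Each online vertex has at most $\Delta/k = \sqrt{\Delta}$ batches, distributed cyclically among the $s \leq \sqrt{\Delta}$ groups, so each $H_i$ receives at most $\lceil \sqrt{\Delta}/s \rceil$ batches of $u$; this yields a deterministic online-side max degree $O(k\cdot \sqrt{\Delta}/s) = O(\Delta/s)$ in every $H_i$. For the offline side I reuse the Chernoff argument of \cref{lem:offlinesrandomization}: fix an offline vertex $v$ and index $i$, and for every edge $e$ of $v$ let $X_e$ indicate that $e$ lands in $H_i$. Edges of $v$ go to distinct online vertices (since $G$ is simple), so the $X_e$ are functions of independent shifts $b_{u'}$ and are therefore independent with mean $1/s$. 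Hence $\E[\sum_e X_e] \leq \Delta/s$, and \cref{prep:chernoff} with $\delta=1$ together with a union bound over all $n$ offline vertices and $s$ groups gives that every $H_i$ has offline-side max degree at most $2\Delta/s$ with probability $1-1/\poly(n)$, using $\Delta/s \geq C\log^2 n$.

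Each $H_i$ is therefore a one-sided batch-arrival stream with batch size $k = \Theta(\sqrt{\Delta})$ and max degree $\Delta_i = O(\Delta/s)$; the hypothesis of \cref{lem:batch-naive} is satisfied because $s \leq \sqrt{\Delta}$ forces $k \leq \Delta/s$. I run $s$ parallel instances of \cref{lem:batch-naive} on $H_1, \ldots, H_s$ with disjoint color palettes. Each instance uses $O(\Delta_i^2/k) = O(\Delta^{1.5}/s^2)$ colors and $O(n)$ space, summing to $O(\Delta^{1.5}/s)$ total colors and $\widetilde{O}(ns)$ total space, with an additional $O(n)$ for the counters and shifts. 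The only delicate step is the Chernoff concentration on the offline-side degree, which is precisely why the brute-force fallback is needed in the regime where $\Delta/s$ is not at least polylogarithmic; everything else is a direct substitution of parameters into \cref{lem:batch-naive}.
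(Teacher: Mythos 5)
Your proposal is correct and takes essentially the same route as the paper's proof: a random per-vertex shift to split the batches into $s$ groups, a deterministic $O(\Delta/s)$ bound on the online side, a Chernoff bound (plus union bound) for the offline-side degrees, $s$ parallel instances of \cref{lem:batch-naive} with disjoint palettes giving $s \cdot O(\Delta^{1.5}/s^2) = O(\Delta^{1.5}/s)$ colors in $\widetilde{O}(ns)$ space, and an offline fallback when $\Delta$ (resp. $\Delta/s$) is only polylogarithmic. The sole, immaterial, difference is that you assign batches to groups round-robin by batch index via $(I_u+b_u) \bmod s$, whereas the paper groups consecutive blocks of $k/s$ batches and then randomly permutes the group labels via $\sigma_u$; both yield the same degree bounds and the same color/space accounting.
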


\begin{proof} 
We assume that $\Delta \geq 900\log^2 n$. Otherwise we can store the whole graph and $\Delta$ edge color it using $\widetilde{O}(n)$ space.

Since edges arrive in batches of size $k$, each vertex will have at most $\Delta/k$ batches. A naive generalization of the algorithm of \cref{sec:edge-arrival-special} for \cref{lem:batch-alg-for-4} would run $\Delta/k = O(\sqrt{\Delta})$ parallel instances of vertex arrival algorithms. However, since each instance would require $O(n)$ space, this aggregates to $O(n\sqrt{\Delta})$ overall, which is larger than the desired space of $O(ns)$ when $s \ll \sqrt{\Delta}$. 

To get around the issue mentioned above, we partition the batches of each online vertex $u$ into {\em groups}. In particular, we call the first $\frac{k}{s}$ batches of $u$ group 1, the next $\frac{k}{s}$ batches group 2, and so on. Hence, the total number of groups $u$ has will be at most $\frac{k}{k/s} = s$. Observe that the degree of an online vertex $u$ in each group $i$ is at most
\begin{equation}\label{eq:deg-online-group}
\text{(\# batches in group $i$) } \cdot \text{(degree in each batch)} =  \frac{k}{s} \cdot k = O\left(\frac{\Delta}{s}\right).
\end{equation}
Similar to the challenge we faced in the previous section, there is no upper bound smaller than $\Delta$ on the degrees of the offline vertices in each group. To get around this, for each vertex $u$ we draw a random number $g_u \in [s]$ at the beginning of the stream uniformly and independently. Now, for the $i$'th group, define its {\em permuted group number} as
\begin{equation}\label{eq:groupnum}
    \sigma_u(i) = (i + g_u) \bmod s,
\end{equation}
and let $H_{i, u}$ include the edges of $u$ in its $j$'th group where $\sigma_u(j) = i$. Now, for any $i \in [s]$, we define 
$$
H_i := \bigcup_{u \in U} H_{i,u}
$$
to be the graph including all the edges with permuted group numbers $i$.

The following claim is similar to \cref{lem:offlinesrandomization} of the previous section and bounds the maximum degrees in each $H_i$ provided that $\Delta$ is sufficiently large.

\begin{claim}\label{clm:degree_group}
With probability $1-1/\poly(n)$, it happens for all $i \in [s]$ that the maximum degree in $H_i$ is at most $O(\Delta/s)$.
\end{claim}

\begin{proof}
The proof is similar to \cref{lem:offlinesrandomization} of \cref{sec:edge-arrival-special}. Take an online vertex $u \in H_i$. Let $j$ be the group for which $\sigma_u(j) = i$, noting that there is only one group $j$ satisfying this (as $\sigma_u$ is a permutation of $[s]$ to $[s]$). Since, by \cref{eq:deg-online-group} vertex $u$ has $O(\Delta/s)$ edges in group $j$, it will also have $O(\Delta/s)$ edges in graph $H_i$. Note that this guarantee is deterministic for online vertices. For the offline vertices, we use the Chernoff bound. In particular, take an edge $e=(u, v)$ of an offline vertex $v$. This edge is added to graph $H_i$ if and only if the group $j$ that $e$ belongs to gets group number $\sigma_u(j) = i$. Since $\sigma_u$ is a random permutation of $[s]$, this happens with probability $1/s$. Thus, in expectation, at most $\Delta/s$ edges of $u$ belong to $H_i$. Moreover, note this choice is independent for different edges of $v$ since the random permutations of the online vertices are independent. Hence, we can apply the Chernoff bound and get that
$$
\Pr[v \text{ has degree } \geq 2\Delta/s \text{ in $H_i$}] \leq \exp(- \frac{\Delta}{3s}).
$$
Noting that $s \leq \sqrt{\Delta}$ and $\Delta \geq 900 \log^2 n$, we get $\Delta/3s \geq \sqrt{\Delta}/3 \geq 10 \log n$. Hence, the probability above is upper bounded by $n^{-10}$. A union bound over all $n$ choices of offline vertices and all the $s \leq n$ groups, we get that with probability $1-1/n^8$, the maximum degree in each $H_i$ is indeed $O(\Delta/s)$.
\end{proof}

We run $s$ parallel instances of the algorithm of \cref{lem:batch-naive}, one for each \underline{group}, each instance using its own colors. The space requirement of each of these instances is $\widetilde{O}(n)$, so this uses $\widetilde{O}(ns)$ space which is the desired bound. For the number of colors, recall that the edges of each group come in $k/s$ batches, so \cref{lem:batch-naive} requires $O(\Delta' \cdot k / s)$ colors where $\Delta'$ is the maximum degree in the group, which by \cref{clm:degree_group} is upper bounded by $O(\Delta/s)$. So in total, the number of colors that the algorithm uses is at most 
\begin{flalign*}
\underbrace{s}_\text{\# of groups} \times \underbrace{k/s}_\text{batches in each group } \times \underbrace{\Delta/s}_\text{max degree in each group} 
 = O(\Delta k/s) = O(\Delta^{1.5}/s),
\end{flalign*}
as desired. This completes the proof; see \cref{alg:three} for the pseudo-code of the algorithm.
\end{proof}

\begin{figure}
\begin{protocol}\label{alg:three}
    Streaming edge coloring for bipartite graphs under one-sided batch arrivals with batches of size $k = \Theta(\sqrt{\Delta})$. The algorithm uses $O(\Delta^{1.5}/s)$ colors and $O(n s)$ space. This is the algorithm used for \cref{lem:batch-alg} which plugged into \cref{cl:edge-to-batch} proves \cref{thm:main}.

    \smallskip\smallskip
    \textbf{Parameter:} $k := \sqrt{\Delta}$, $s$ (space we use = $O(ns)$)
    
    \begin{enumerate}[leftmargin=15pt]
        \item If $\Delta \leq 900 \log^2 n$ store the whole graph and color it with $\Delta$ colors using offline algorithms. So assume for the rest of the algorithm that $\Delta > 900 \log^2 n$.
    
        \item We run $s$ parallel instances of the  algorithm of \cref{lem:batch-naive} on subgraphs $H_1, \ldots, H_s$ of the input graph $G$. Each subgraph $H_i$ is a bipartite graph with one vertex part including its {\em online} vertices and the other part including {\em offline} vertices. Each $H_i$ will have maximum degree $O(\Delta/s)$ (by \cref{clm:degree_group}), and its edges will arrive in batches of size $k$ where all edges in the same batch are adjacent to the same online vertex. Since each instance requires $\widetilde{O}(n)$ space by \cref{lem:batch-naive}, the total space needed is $\widetilde{O}(ns)$.
        
        \item For each online vertex $u$, pick and store a random number $g_s \in [s]$ uniformly and independently.
        \item For each online vertex $u$ store a counter $I_u$. This will keep track of the number of {\em batches} of $u$ seen at any point. Initially we have $I_u \gets 0$.
    
    \item Upon arrival of a batch of size $k$ for an online vertex $u$:
    \label{alg3_readedge}
        \begin{enumerate}
            \item $I_u \gets I_u + 1$.
            \item Let $I'_u := \lceil I_u/(k/s)\rceil$ be the {\em group number} of $u$. In particular, the first $k/s$ batches of $u$ will be in group 1, the second $k/s$ batches will be in group 2 and so on so forth.
            \item Take $k$ edges of $u$ as a batch, remove them from the buffer $T$, and feed them as a batch to graph $H_i$ where
            $$
            i := (I'_u + g_u) \bmod s.
            $$
            \item Using the instance of \cref{lem:batch-naive} run on graph $H_i$, we color the edges of $u$.
        \end{enumerate}
    \end{enumerate}
\end{protocol}
\end{figure}


\appendix
\section{Appendix: A Reduction To Bipartite Graphs}\label{apx:bipartization}
\label{appendix:a}

In this section, we present the proof of \cref{lem:bipartization} using a random bipartization idea that has been used extensively in the literature of edge colorings. To our knowledge, this idea was first used by \citet*{DBLP:journals/jal/KarloffS87} in the context of parallel algorithms and by \citet{BehnezhadDHKS19} in the context of streaming edge coloring. We emphasize that we claim no novelty in the algorithm of this section and merely present it here for the sake of self-containment.

\begin{proof}[Proof of \cref{lem:bipartization}]
    If $\Delta \leq 10 \log n$, then we store the whole graph in memory using $\widetilde{O}(n)$ space and edge color the graph via $(\Delta + 1)$ colors using offline algorithms. So assume that $\Delta > 10\log n$.

    Let $V$ be the vertex-set of the graph. We partition $V$ into two subsets $A$ and $B$ by putting each vertex randomly in either $A$ or $B$ chosen uniformly and independently from the rest of the vertices. Now consider the bipartite subgraph $G'$ of $G$ including any edge $(u, v)$ of $G$ that has one endpoint in $A$ and one endpoint in $B$. Note that conditioned on $v \in A$, each neighbor of $v$ in the original graph belongs to $B$ independently with probability 1/2. This means that the expected degree of $v$ in $G'$ is $\deg_G(v)/2$. By a simple application of the Chernoff bound and recalling that $\Delta > 10\log n$, we get that $G'$ has maximum degree $\Delta/2 + O(\sqrt{\Delta\log n}) = (1/2+o(1))\Delta$ with probability $1-1/n^{10}$. We can thus run the bipartite edge coloring algorithm on $G'$. Notice, however, that this leaves the edges that go from $A$ to $A$ or from $B$ to $B$ uncolored. We recursively apply the same algorithm. That is, we recursively partition $A$ and $B$ into two subsets each. This results in a subgraph of maximum degree $\Delta/4 + O(\sqrt{\Delta \log n})$ w.h.p. We continue this for $O(\log \Delta)$ steps until the resulting subgraph has maximum degree smaller than $O(\log n)$, at which point we store the whole subgraph in memory and color it using offline algorithms. Overall, the total number of colors used by the algorithm is
    $$
    \sum_{i=1}^{\log \Delta} f\left(\left(\frac{1}{2^i}+o(1)\right)\Delta\right) \leq \sum_{i=1}^{\log \Delta} f\left(\frac{\Delta}{2^{i-1}}\right).
    $$
    Note that this partitioning of the vertices can be done at the beginning of the stream before any edges arrive. It only suffices to store, for each vertex, which of the two random subsets it belongs to at each of the $O(\log \Delta)$ levels. So this only requires an overhead space of $O(n\log \Delta)$. Additionally, if the edges of the original graph arrive under vertex arrivals, then so do the edges of the random bipartite graphs.
\end{proof}

\bibliographystyle{plainnat}
\bibliography{references}

\begin{thebibliography}{40}
\providecommand{\natexlab}[1]{#1}
\providecommand{\url}[1]{\texttt{#1}}
\expandafter\ifx\csname urlstyle\endcsname\relax
  \providecommand{\doi}[1]{doi: #1}\else
  \providecommand{\doi}{doi: \begingroup \urlstyle{rm}\Url}\fi

\bibitem[Ahn et~al.(2012{\natexlab{a}})Ahn, Guha, and McGregor]{AhnGM12}
Kook~Jin Ahn, Sudipto Guha, and Andrew McGregor.
\newblock Analyzing graph structure via linear measurements.
\newblock In \emph{Proceedings of the Twenty-Third Annual {ACM-SIAM} Symposium
  on Discrete Algorithms, {SODA} 2012, Kyoto, Japan, January 17-19, 2012},
  pages 459--467, 2012{\natexlab{a}}.

\bibitem[Ahn et~al.(2012{\natexlab{b}})Ahn, Guha, and McGregor]{AhnGM12b}
Kook~Jin Ahn, Sudipto Guha, and Andrew McGregor.
\newblock Graph sketches: sparsification, spanners, and subgraphs.
\newblock In \emph{Proceedings of the 31st {ACM} {SIGMOD-SIGACT-SIGART}
  Symposium on Principles of Database Systems, {PODS} 2012, Scottsdale, AZ,
  USA, May 20-24, 2012}, pages 5--14, 2012{\natexlab{b}}.

\bibitem[Ahn et~al.(2015)Ahn, Cormode, Guha, McGregor, and Wirth]{AhnCGMW15}
Kook~Jin Ahn, Graham Cormode, Sudipto Guha, Andrew McGregor, and Anthony Wirth.
\newblock Correlation clustering in data streams.
\newblock In \emph{Proceedings of the 32nd International Conference on Machine
  Learning, {ICML} 2015, Lille, France, 6-11 July 2015}, pages 2237--2246,
  2015.

\bibitem[Ansari et~al.(2022)Ansari, Saneian, and Zarrabi{-}Zadeh]{AnsariSZ22}
Mohammad Ansari, Mohammad Saneian, and Hamid Zarrabi{-}Zadeh.
\newblock Simple streaming algorithms for edge coloring.
\newblock In \emph{30th Annual European Symposium on Algorithms, {ESA} 2022,
  September 5-9, 2022, Berlin/Potsdam, Germany}, pages 8:1--8:4, 2022.

\bibitem[Assadi(2022)]{Assadi-SODA22}
Sepehr Assadi.
\newblock A two-pass (conditional) lower bound for semi-streaming maximum
  matching.
\newblock In \emph{Proceedings of the 2022 {ACM-SIAM} Symposium on Discrete
  Algorithms, {SODA} 2022, Virtual Conference / Alexandria, VA, USA, January 9
  - 12, 2022}, pages 708--742, 2022.

\bibitem[Assadi and Behnezhad(2021)]{AssadiB-ICALP21}
Sepehr Assadi and Soheil Behnezhad.
\newblock Beating two-thirds for random-order streaming matching.
\newblock In \emph{48th International Colloquium on Automata, Languages, and
  Programming, {ICALP} 2021, July 12-16, 2021, Glasgow, Scotland (Virtual
  Conference)}, pages 19:1--19:13, 2021.

\bibitem[Assadi and Dudeja(2021{\natexlab{a}})]{AssadiD-SOSA21}
Sepehr Assadi and Aditi Dudeja.
\newblock A simple semi-streaming algorithm for global minimum cuts.
\newblock In \emph{4th Symposium on Simplicity in Algorithms, {SOSA} 2021,
  Virtual Conference, January 11-12, 2021}, pages 172--180, 2021{\natexlab{a}}.

\bibitem[Assadi and Dudeja(2021{\natexlab{b}})]{AssadiD21}
Sepehr Assadi and Aditi Dudeja.
\newblock A simple semi-streaming algorithm for global minimum cuts.
\newblock In \emph{4th Symposium on Simplicity in Algorithms, {SOSA} 2021,
  Virtual Conference, January 11-12, 2021}, pages 172--180. {SIAM},
  2021{\natexlab{b}}.

\bibitem[Assadi and Sundaresan(2023)]{AssadiS-STOC23}
Sepehr Assadi and Janani Sundaresan.
\newblock (noisy) gap cycle counting strikes back: Random order streaming lower
  bounds for connected components and beyond.
\newblock In \emph{Proccedings of the 55th Annual {ACM} {SIGACT} Symposium on
  Theory of Computing, {STOC} 2023, to appear}, 2023.

\bibitem[Assadi et~al.(2019)Assadi, Chen, and Khanna]{AssadiCK19}
Sepehr Assadi, Yu~Chen, and Sanjeev Khanna.
\newblock Sublinear algorithms for $(\delta + 1)$ vertex coloring.
\newblock In \emph{Proceedings of the Thirtieth Annual {ACM-SIAM} Symposium on
  Discrete Algorithms, {SODA} 2019, San Diego, California, USA, January 6-9,
  2019}, pages 767--786, 2019.

\bibitem[Assadi et~al.(2022{\natexlab{a}})Assadi, Chakrabarti, Ghosh, and
  Stoeckl]{AssadiCGS-Arxiv}
Sepehr Assadi, Amit Chakrabarti, Prantar Ghosh, and Manuel Stoeckl.
\newblock Coloring in graph streams via deterministic and adversarially robust
  algorithms.
\newblock \emph{CoRR}, abs/2212.10641, 2022{\natexlab{a}}.

\bibitem[Assadi et~al.(2022{\natexlab{b}})Assadi, Chen, and Sun]{AssadiCS22}
Sepehr Assadi, Andrew Chen, and Glenn Sun.
\newblock Deterministic graph coloring in the streaming model.
\newblock In \emph{{STOC} '22: 54th Annual {ACM} {SIGACT} Symposium on Theory
  of Computing, Rome, Italy, June 20 - 24, 2022}, pages 261--274,
  2022{\natexlab{b}}.

\bibitem[Assadi et~al.(2022{\natexlab{c}})Assadi, Kumar, and
  Mittal]{AssadiKM22}
Sepehr Assadi, Pankaj Kumar, and Parth Mittal.
\newblock Brooks' theorem in graph streams: a single-pass semi-streaming
  algorithm for $\delta$-coloring.
\newblock In \emph{{STOC} '22: 54th Annual {ACM} {SIGACT} Symposium on Theory
  of Computing, Rome, Italy, June 20 - 24, 2022}, pages 234--247,
  2022{\natexlab{c}}.

\bibitem[Assadi et~al.(2023)Assadi, Behnezhad, Khanna, and
  Li]{AssadiBKL-STOC23}
Sepehr Assadi, Soheil Behnezhad, Sanjeev Khanna, and Huan Li.
\newblock On regularity lemma and barriers in streaming and dynamic matching.
\newblock In \emph{Proccedings of the 55th Annual {ACM} {SIGACT} Symposium on
  Theory of Computing, {STOC} 2023, to appear}, 2023.

\bibitem[Bar{-}Noy et~al.(1992)Bar{-}Noy, Motwani, and Naor]{Bar-NoyMN92}
Amotz Bar{-}Noy, Rajeev Motwani, and Joseph Naor.
\newblock The greedy algorithm is optimal for on-line edge coloring.
\newblock \emph{Inf. Process. Lett.}, 44\penalty0 (5):\penalty0 251--253, 1992.

\bibitem[Behnezhad et~al.(2019)Behnezhad, Derakhshan, Hajiaghayi, Knittel, and
  Saleh]{BehnezhadDHKS19}
Soheil Behnezhad, Mahsa Derakhshan, MohammadTaghi Hajiaghayi, Marina Knittel,
  and Hamed Saleh.
\newblock Streaming and massively parallel algorithms for edge coloring.
\newblock In \emph{27th Annual European Symposium on Algorithms, {ESA} 2019,
  September 9-11, 2019, Munich/Garching, Germany}, pages 15:1--15:14, 2019.

\bibitem[Behnezhad et~al.(2022)Behnezhad, Charikar, Ma, and
  Tan]{BehnezhadCMT22}
Soheil Behnezhad, Moses Charikar, Weiyun Ma, and Li{-}Yang Tan.
\newblock Almost 3-approximate correlation clustering in constant rounds.
\newblock In \emph{63rd {IEEE} Annual Symposium on Foundations of Computer
  Science, {FOCS} 2022, Denver, CO, USA, October 31 - November 3, 2022}, pages
  720--731, 2022.

\bibitem[Behnezhad et~al.(2023)Behnezhad, Charikar, Ma, and
  Tan]{BehnezhadCMT-SODA23}
Soheil Behnezhad, Moses Charikar, Weiyun Ma, and Li{-}Yang Tan.
\newblock Single-pass streaming algorithms for correlation clustering.
\newblock In \emph{Proceedings of the 2023 {ACM-SIAM} Symposium on Discrete
  Algorithms, {SODA} 2023, to appear}, 2023.

\bibitem[Bera et~al.(2020)Bera, Chakrabarti, and Ghosh]{BeraCG20}
Suman~K. Bera, Amit Chakrabarti, and Prantar Ghosh.
\newblock Graph coloring via degeneracy in streaming and other space-conscious
  models.
\newblock In \emph{47th International Colloquium on Automata, Languages, and
  Programming, {ICALP} 2020, July 8-11, 2020, Saarbr{\"{u}}cken, Germany
  (Virtual Conference)}, pages 11:1--11:21, 2020.

\bibitem[Bhattacharya et~al.(2021)Bhattacharya, Grandoni, and
  Wajc]{BhattacharyaGW21}
Sayan Bhattacharya, Fabrizio Grandoni, and David Wajc.
\newblock Online edge coloring algorithms via the nibble method.
\newblock In \emph{Proceedings of the 2021 {ACM-SIAM} Symposium on Discrete
  Algorithms, {SODA} 2021, Virtual Conference, January 10 - 13, 2021}, pages
  2830--2842. {SIAM}, 2021.

\bibitem[Chakrabarti et~al.(2022)Chakrabarti, Ghosh, and
  Stoeckl]{ChakrabartiGS22}
Amit Chakrabarti, Prantar Ghosh, and Manuel Stoeckl.
\newblock Adversarially robust coloring for graph streams.
\newblock In \emph{13th Innovations in Theoretical Computer Science Conference,
  {ITCS} 2022, January 31 - February 3, 2022, Berkeley, CA, {USA}}, pages
  37:1--37:23, 2022.

\bibitem[Charikar and Liu(2021)]{CharikarL21}
Moses Charikar and Paul Liu.
\newblock Improved algorithms for edge colouring in the w-streaming model.
\newblock In \emph{4th Symposium on Simplicity in Algorithms, {SOSA} 2021,
  Virtual Conference, January 11-12, 2021}, pages 181--183, 2021.

\bibitem[Chen et~al.(2022)Chen, Khanna, and Li]{ChenKL22}
Yu~Chen, Sanjeev Khanna, and Huan Li.
\newblock On weighted graph sparsification by linear sketching.
\newblock In \emph{63rd {IEEE} Annual Symposium on Foundations of Computer
  Science, {FOCS} 2022, Denver, CO, USA, October 31 - November 3, 2022}, pages
  474--485, 2022.

\bibitem[Cohen et~al.(2019)Cohen, Peng, and Wajc]{CohenPW19}
Ilan~Reuven Cohen, Binghui Peng, and David Wajc.
\newblock Tight bounds for online edge coloring.
\newblock In \emph{60th {IEEE} Annual Symposium on Foundations of Computer
  Science, {FOCS} 2019, Baltimore, Maryland, USA, November 9-12, 2019}, pages
  1--25. {IEEE} Computer Society, 2019.

\bibitem[Cohen{-}Addad et~al.(2021)Cohen{-}Addad, Lattanzi, Mitrovic,
  Norouzi{-}Fard, Parotsidis, and Tarnawski]{Cohen-AddadLMNP21}
Vincent Cohen{-}Addad, Silvio Lattanzi, Slobodan Mitrovic, Ashkan
  Norouzi{-}Fard, Nikos Parotsidis, and Jakub Tarnawski.
\newblock Correlation clustering in constant many parallel rounds.
\newblock In \emph{Proceedings of the 38th International Conference on Machine
  Learning, {ICML} 2021, 18-24 July 2021, Virtual Event}, pages 2069--2078,
  2021.

\bibitem[Demetrescu et~al.(2010)Demetrescu, Finocchi, and
  Ribichini]{Demetrescu10}
Camil Demetrescu, Irene Finocchi, and Andrea Ribichini.
\newblock Trading off space for passes in graph streaming problems.
\newblock \emph{ACM Trans. Algorithms}, 6\penalty0 (1), dec 2010.
\newblock ISSN 1549-6325.
\newblock \doi{10.1145/1644015.1644021}.

\bibitem[Feigenbaum et~al.(2004)Feigenbaum, Kannan, McGregor, Suri, and
  Zhang]{FeigenbaumKMSZ04}
Joan Feigenbaum, Sampath Kannan, Andrew McGregor, Siddharth Suri, and Jian
  Zhang.
\newblock On graph problems in a semi-streaming model.
\newblock In \emph{Automata, Languages and Programming: 31st International
  Colloquium, {ICALP} 2004, Turku, Finland, July 12-16, 2004. Proceedings},
  pages 531--543, 2004.

\bibitem[Frieze(1986)]{Frieze86}
Alan~M. Frieze.
\newblock Maximum matchings in a class of random graphs.
\newblock \emph{J. Comb. Theory, Ser. {B}}, 40\penalty0 (2):\penalty0 196--212,
  1986.

\bibitem[Gabow(1985)]{gabow1985algorithms}
Harold~N Gabow.
\newblock Algorithms for edge-coloring graphs.
\newblock \emph{Technical Report}, 1985.

\bibitem[Ghaffari et~al.(2020)Ghaffari, Nowicki, and Thorup]{Ghaffari0T20}
Mohsen Ghaffari, Krzysztof Nowicki, and Mikkel Thorup.
\newblock Faster algorithms for edge connectivity via random 2-out
  contractions.
\newblock In \emph{Proceedings of the 2020 {ACM-SIAM} Symposium on Discrete
  Algorithms, {SODA} 2020, Salt Lake City, UT, USA, January 5-8, 2020}, pages
  1260--1279. {SIAM}, 2020.

\bibitem[Ghosh and Stoeckl(2023)]{concurrent}
Prantar Ghosh and Manuel Stoeckl.
\newblock Low-memory algorithms for online and w-streaming edge coloring.
\newblock \emph{CoRR}, abs/2304.12285, 2023.

\bibitem[Glazik et~al.(2017)Glazik, Schiemann, and Srivastav]{Glazik2017}
Christian Glazik, Jan Schiemann, and Anand Srivastav.
\newblock Finding euler tours in one pass in the w-streaming model with o(n
  log(n)) {RAM}.
\newblock \emph{CoRR}, abs/1710.04091, 2017.

\bibitem[Holm et~al.(2019)Holm, King, Thorup, Zamir, and Zwick]{HolmKTZZ19}
Jacob Holm, Valerie King, Mikkel Thorup, Or~Zamir, and Uri Zwick.
\newblock Random k-out subgraph leaves only o(n/k) inter-component edges.
\newblock In \emph{60th {IEEE} Annual Symposium on Foundations of Computer
  Science, {FOCS} 2019, Baltimore, Maryland, USA, November 9-12, 2019}, pages
  896--909. {IEEE} Computer Society, 2019.

\bibitem[Hopcroft and Karp(1971)]{4569670}
John~E. Hopcroft and Richard~M. Karp.
\newblock A n5/2 algorithm for maximum matchings in bipartite.
\newblock In \emph{12th Annual Symposium on Switching and Automata Theory (SWAT
  1971)}, pages 122--125, 1971.

\bibitem[Kapralov(2021)]{Kapralov21}
Michael Kapralov.
\newblock Space lower bounds for approximating maximum matching in the edge
  arrival model.
\newblock In \emph{Proceedings of the 2021 {ACM-SIAM} Symposium on Discrete
  Algorithms, {SODA} 2021, Virtual Conference, January 10 - 13, 2021}, pages
  1874--1893, 2021.

\bibitem[Kapralov et~al.(2014)Kapralov, Lee, Musco, Musco, and
  Sidford]{KapralovLMMS14}
Michael Kapralov, Yin~Tat Lee, Cameron Musco, Christopher Musco, and Aaron
  Sidford.
\newblock Single pass spectral sparsification in dynamic streams.
\newblock In \emph{55th {IEEE} Annual Symposium on Foundations of Computer
  Science, {FOCS} 2014, Philadelphia, PA, USA, October 18-21, 2014}, pages
  561--570, 2014.

\bibitem[Kapralov et~al.(2020)Kapralov, Mousavifar, Musco, Musco, Nouri,
  Sidford, and Tardos]{KapralovMMMNST20}
Michael Kapralov, Aida Mousavifar, Cameron Musco, Christopher Musco, Navid
  Nouri, Aaron Sidford, and Jakab Tardos.
\newblock Fast and space efficient spectral sparsification in dynamic streams.
\newblock In \emph{Proceedings of the 2020 {ACM-SIAM} Symposium on Discrete
  Algorithms, {SODA} 2020, Salt Lake City, UT, USA, January 5-8, 2020}, pages
  1814--1833, 2020.

\bibitem[Karloff and Shmoys(1987)]{DBLP:journals/jal/KarloffS87}
Howard~J. Karloff and David~B. Shmoys.
\newblock Efficient parallel algorithms for edge coloring problems.
\newblock \emph{J. Algorithms}, 8\penalty0 (1):\penalty0 39--52, 1987.

\bibitem[Kulkarni et~al.(2022)Kulkarni, Liu, Sah, Sawhney, and
  Tarnawski]{KulkarniLSST22}
Janardhan Kulkarni, Yang~P. Liu, Ashwin Sah, Mehtaab Sawhney, and Jakub
  Tarnawski.
\newblock Online edge coloring via tree recurrences and correlation decay.
\newblock In \emph{{STOC} '22: 54th Annual {ACM} {SIGACT} Symposium on Theory
  of Computing, Rome, Italy, June 20 - 24, 2022}, pages 104--116. {ACM}, 2022.

\bibitem[Saberi and Wajc(2021)]{SaberiW21}
Amin Saberi and David Wajc.
\newblock The greedy algorithm is not optimal for on-line edge coloring.
\newblock In \emph{48th International Colloquium on Automata, Languages, and
  Programming, {ICALP} 2021, July 12-16, 2021, Glasgow, Scotland (Virtual
  Conference)}, volume 198 of \emph{LIPIcs}, pages 109:1--109:18. Schloss
  Dagstuhl - Leibniz-Zentrum f{\"{u}}r Informatik, 2021.

\end{thebibliography}

\end{document}